\newtheorem{theorem}{Theorem}
\newtheorem{lemma}{Lemma}
\newtheorem{definition}{Definition}
\begin{document}
\title{\Huge Cooperative Data Exchange with Unreliable Clients}

\author{Anoosheh Heidarzadeh and Alex Sprintson\thanks{The authors are with the Department of Electrical and Computer Engineering, Texas A\&M University, College Station, TX 77843 USA (E-mail: anoosheh@tamu.edu; spalex@tamu.edu).}\thanks{This work was supported by the National Science Foundation under Grant No. CNS-0954153 and by the AFOSR under contract No. FA9550-13-1-0008.}}

\maketitle
\thispagestyle{empty}
\begin{abstract} 
Consider a set of clients in a broadcast network, each of which holds a subset of packets in the ground set $X$. In the (coded) cooperative data exchange problem, the clients need to recover all packets in $X$ by exchanging coded packets over a lossless broadcast channel. Several previous works analyzed this problem under the assumption that each client initially holds a random subset of packets in $X$. In this paper we consider a generalization of this problem for settings in which an unknown (but of a certain size) subset of clients are unreliable and their packet transmissions are subject to arbitrary erasures. For the special case of one unreliable client, we derive a closed-form expression for the minimum number of transmissions required for each reliable client to obtain all packets held by other reliable clients (with probability approaching $1$ as the number of packets tends to infinity). Furthermore, for the cases with more than one unreliable client, we provide an approximation solution in which the number of transmissions per packet is within an arbitrarily small additive factor from the value of the optimal solution.
\end{abstract}

\section{Introduction}
Consider a network of clients that share a broadcast channel, each of which holds a subset of packets of the ground set $X$ of size $K$. In the \emph{cooperative data exchange problem}~\cite{SSBR:2010} (also known as \emph{universal recovery}), each client wishes to recover all the packets in $X$. To achieve this goal, the clients exchange data by transmitting coded packets over a shared lossless broadcast channel. Assuming that each client knows which packets are known by all other clients, the problem is to specify how many and which (coded) packets each client requires to transmit so as to achieve the universal recovery. 

In this work, we consider a generalization of the cooperative data exchange problem, for the settings where a certain number of clients are unreliable and their packet transmissions are subject to arbitrary erasures. Specifically, our problem is to minimize the total number of transmissions required to achieve \emph{robust recovery}, i.e., each reliable client can recover all packets held by the other reliable clients. Since the identity of the unreliable clients is unknown, the coding scheme must include redundant transmissions to tolerate a failure of a subset of clients of a certain size.

This problem has several interesting practical applications. For instance, it captures the scenario where some clients, initially part of the network, leave the network (deliberately or not) before the end of the data exchange session. Another example is the scenario where a subset of clients are compromised by an adversary, and accordingly their packet transmissions can be dropped in an arbitrary manner.

\subsection{Related Work}
Recently, there has been a significant interest in the cooperative data exchange problem, specifically due to the emergence of powerful techniques employing network coding~\cite{ACLY:2000,LYC:2003}. The cooperative data exchange problem was originally introduced in~\cite{RCS:2007}, where a broadcasting network was considered, and was later generalized to arbitrary networks in~\cite{CXW:2010,CW:2010,CW:2014,GL:2012}. Originally, lower and upper bounds on the minimum required number of transmissions were established in~\cite{RSS:2010}, and later, randomized and deterministic solutions to the problem were presented in~\cite{SSBR:2010,SSBR2:2010} and \cite{MPRGR:2011}. Scenarios considering various transmission costs have been studied in~\cite{OS:2011,TSS:2011}, and scenarios providing secrecy and weak security, in the presence of an eavesdropper, have been considered in~\cite{CW:2011,CH:2014} and~\cite{YS:2013,YSZ:2014}, respectively.  

To the best of our knowledge, the only ``closed-form'' solution to the cooperative data exchange problem (without unreliable clients) is given in~\cite{CXW:2010}, under the assumption of random packet distribution. This solution is shown to be correct with probability approaching $1$ as the number of packets approaches infinity. Such a result, while asymptotic and exclusive to the random packet distribution, provides valuable theoretical insights as well as reasonable approximation that can be used for constructing practical algorithms. However, this solution is limited to the settings in which all clients are reliable. This motivates the present work which attempts to bridge the gap and investigates closed-form (exact and approximate) solutions to the cooperative data exchange problem with unreliable clients, under the random packet distribution assumption.  

% (The distribution of the packets is often close to random and sharp concentration results exist for random distributions.) 

\subsection{Our Contributions}
For the case with an arbitrary number of unreliable clients, we compute a closed-form \emph{approximate solution} in which the total number of transmissions per packet (i.e., normalized by the number of packets $K$) is within an arbitrarily small (yet non-vanishing) additive factor of the optimal solution, with probability approaching $1$ as $K$ goes to infinity. 

Also, for the special case with one single unreliable client, we derive a closed-form \emph{exact solution} which requires, with probability approaching $1$ as $K$ goes to infinity, the minimum total number of transmissions. The exact solution yields a zero additive optimality gap (to the minimum total number of transmissions) and hence is stronger than our approximate solution which yields a nonzero gap that grows linearly with $K$. The strength of this result, however, comes with its restriction to a special case, and its generalization to settings with more than one unreliable client remains open.

%\subsection{Organization}
%Section~\ref{sec:ProbForm} formally introduces the problem and provides basic definitions. In Section~\ref{sec:MainResults}, we present our main results, and Section~\ref{sec:Proofs} contains the proofs.  

\section{Problem Setup and Definitions}\label{sec:ProbForm}
Consider $N$ clients and the set $X$ of $K$ packets $x_1,x_2,\dots,x_K$. We use the short notation $[n]$ to represent the set $\{1,\dots,n\}$, for any integer $n$. Each client $i\in [N]$ holds a subset $X_i$ of the packets in the set $X$ (without loss of generality, we assume $X=\cup_{1\leq i\leq N} X_i$). We also denote by $\overline{X}_i = X\setminus X_i$ the set of packets missing at the client $i$. We further assume that each packet is available at each client, independently from other packets and clients, with probability (w.p.) $\alpha$. (This assumption is referred to as the \emph{random packet distribution} in~\cite{CXW:2010}.) 

We assume that $M$ ($0\leq M<N$) clients are ``unreliable'' and that the identity of unreliable clients is unknown. Each reliable client broadcasts over a lossless channel, while the packets broadcasted by each unreliable client are subject to arbitrary erasures. The goal of the cooperative data exchange in this setting is to achieve \emph{robust recovery} that guarantees that each reliable client can recover all the packets known by the other reliable clients. We use the notion of robust recovery since universal recovery might not be achievable in our setting due to the fact that that it might not be possible to obtain a packet held by unreliable clients only. It is worth noting that for the special case with no unreliable clients ($M=0$), the robust recovery problem becomes equivalent to the universal recovery problem. 

We further assume that each packet $x_i$ is $P$-divisible, where $P=N-M-1$, i.e., $x_i$ can be partitioned into $P$ chunks of equal size (the reason for this choice of $P$ will become clear later), and transmissions can consist of a single chunk (as opposed to an entire packet).

Let $\{r_i\}\doteq \{r_i\}(\{X_j\})$, $1\leq i\leq N$ be the \emph{transmission schedule} for a given instance $\{X_i\}$ of the problem at hand. (By the $P$-divisibility assumption, it follows that for each client $i$, the number of its transmissions $r_i$ is a rational number of the form $\frac{n}{P}$, for some non-negative integer $n$). The transmission schedule $\{r_i\}$ is said to be \emph{feasible for instance} $\{X_i\}$ if there exists a coding scheme with each client $i$ transmitting $r_i$ coded packets that achieves robust recovery. The following definitions assume that $\{X_i\}$ is drawn according to the random packet distribution.

\begin{definition}
The transmission schedule $\{r_i\}$ is said to be \emph{feasible} if it is feasible for a random instance $\{X_i\}$ w.p.~approaching $1$ as $K\rightarrow\infty$. 
\end{definition}

\begin{definition}
The transmission schedule $\{r_i\}$ is said to be an \emph{exact solution} if it is feasible and $\sum_{i} r_i$ is equal to the minimum total number of transmissions required for robust recovery.
\end{definition}

\begin{definition}
The transmission schedule $\{r_i\}$ is said to be an \emph{approximate solution} if it is feasible and $\frac{1}{K}\sum_{i} r_i$ is within $\epsilon$ of the ratio of the minimum total number of transmissions required for robust recovery to $K$, for any $\epsilon>0$, w.p.~approaching $1$ as $K\rightarrow\infty$. 
\end{definition}

In this work, our problem is to determine a closed-form exact or approximate solution for a random instance of the robust recovery problem. Given a feasible transmission schedule, the clients can achieve robust recovery (with high probability) by employing random linear network coding (over a sufficiently large finite field), i.e., transmitting random linear combinations of their packets. (This comes from the fact that the problem of robust recovery, similar to the problem of universal recovery~\cite{CXW:2010,CW:2010,CW:2014}, can be reduced to a multicast network coding problem.) 

\section{Main Results}\label{sec:MainResults}
%In this section, we first state our main results, and next, analyze them for both the asymptotic and non-asymptotic regimes (with respect to $K$).

%\subsection{Approximate and Exact Solutions}
For the special case with no unreliable client ($M=0$), it was previously shown in~\cite{CXW:2010} that the optimal number of transmissions for each client can be found by solving the following Linear Program (LP): 
\begin{eqnarray}\label{eq:LP}
\mathrm{minimize} && \sum_{i=1}^{N} r_i, \\ \nonumber
\mbox{s.t.} && \sum_{i\in\cal{N}} r_i\geq \bigg|\bigcap_{i\in {\mathcal{\overline{N}}}}{\overline{X}}_i\bigg|, \hspace{5pt} \forall \emptyset \subsetneq{\mathcal{N}}\subsetneq [N],
\end{eqnarray} where $\overline{\mathcal{N}} = [N]\setminus \mathcal{N}$. 

Now, consider the case with $M$ unreliable clients. Since the set of unreliable clients ($\mathcal{I}$) is not known apriori, the robust recovery is achievable so long as for every $\mathcal{I}\subset [N]$, $|\mathcal{I}|=M$, each client $i\in[N]\setminus\mathcal{I}$ can recover all the packets held by the other clients $j\in[N]\setminus\mathcal{I}$. In the case without unreliable clients, the set of packets each client $i$ requires, $\overline{X}_i$, is the collection of the packets available at the other clients (but not available at client $i$), i.e.,
\begin{equation}
\overline{X}_i = X\setminus X_i.	
\end{equation} However, in the presence of unreliable clients, the set of packets each reliable client $i\notin \mathcal{I}$ requires, denoted by $\overline{X}_{i,\mathcal{I}}$, is the collection of the packets each of which is available at some other reliable client (but not available at client $i$), i.e., 
\begin{equation}\label{eq:XiI}
\overline{X}_{i,\mathcal{I}}= \cup_{j\in [N]\setminus \mathcal{I}} X_j \setminus X_i. %\overline{X}_i-(\cup_{j\in\mathcal{I}} X_j -\cup_{j\in [N]\setminus \mathcal{I}} X_j).
\end{equation} Thus, we need to revise the set of constraints in~\eqref{eq:LP} so as to take into account (i) every possible set of unreliable clients and (ii) the set of packets each reliable client requires for any possible subset of unreliable clients. The following theorem is a straightforward generalization of LP~\eqref{eq:LP} for the case with $M$ unreliable clients, and appears without proof. 

\begin{theorem}[Robust Recovery]\label{thm:NewLP} 
The minimum total number of transmissions required for robust recovery is the optimal value of the following LP:
\begin{eqnarray}\label{eq:ExpNewLP}
\mathrm{minimize} && \sum_{i=1}^{N} r_i, \\ \nonumber
\mbox{s.t.} && \sum_{i\in\cal{N}} r_i\geq \max_{{\scriptsize \begin{array}{c} \mathcal{I}\subset \mathcal{\overline{N}},\\ |\mathcal{I}|=M\end{array}}}\Biggl|\bigcap_{i\in {\cal \overline{N}}\setminus\mathcal{I}}{\overline{X}}_{i,\mathcal{I}}\Biggr|, \\ \nonumber && \forall \{\mathcal{N}\subset [N]:1\leq |\mathcal{N}|\leq P\},
\end{eqnarray}
 where $N$ is the number of clients, $M$ is the number of unreliable clients.
\end{theorem}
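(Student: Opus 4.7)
The plan is to prove the theorem by establishing both directions of a standard min-cut / max-flow equivalence: (i) any feasible transmission schedule must satisfy each inequality in~\eqref{eq:ExpNewLP} (converse), and (ii) any schedule satisfying all such inequalities admits a random linear network coding scheme realizing robust recovery with probability tending to $1$ (achievability). This follows the template used for LP~\eqref{eq:LP} in the adversary-free case in~\cite{CXW:2010}, with an additional quantifier over the worst-case unreliable set $\mathcal{I}$.

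For the converse, I would fix $\mathcal{N}\subset[N]$ with $1\le|\mathcal{N}|\le P$ and any candidate $\mathcal{I}\subset\overline{\mathcal{N}}$ of size $M$. The clients in $\overline{\mathcal{N}}\setminus\mathcal{I}$ are then reliable and each must recover every packet in $\overline{X}_{i,\mathcal{I}}$. By definition of the intersection, every packet in $\bigcap_{i\in\overline{\mathcal{N}}\setminus\mathcal{I}}\overline{X}_{i,\mathcal{I}}$ is missing at every client in $\overline{\mathcal{N}}\setminus\mathcal{I}$, so it can only be delivered by a client in $\mathcal{N}\cup\mathcal{I}$. Since the transmissions of $\mathcal{I}$ may be arbitrarily erased, in the worst case the entire burden falls on $\mathcal{N}$, yielding
\begin{equation*}
\sum_{i\in\mathcal{N}} r_i \;\ge\; \Bigl|\bigcap_{i\in\overline{\mathcal{N}}\setminus\mathcal{I}} \overline{X}_{i,\mathcal{I}}\Bigr|.
\end{equation*}
Maximizing over $\mathcal{I}$ gives the constraint. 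The cap $|\mathcal{N}|\le P$ is also natural: if $|\mathcal{N}|>P$ then $|\overline{\mathcal{N}}|<M+1$, so no $\mathcal{I}$ of size $M$ leaves a nonempty reliable complement and the constraint is vacuous.

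For achievability, I would recast the problem as a multicast network coding instance, as already noted by the authors for robust recovery: a super-source is linked to each client $i$ by an edge delivering $X_i$; each client $i$ broadcasts $r_i$ coded chunks on a hyperedge observable by all others; and for every pair $(\mathcal{I},j)$ with $|\mathcal{I}|=M$ and $j\notin\mathcal{I}$, a sink $t_{\mathcal{I},j}$ is installed that demands $\cup_{k\notin\mathcal{I}} X_k$ and ignores the hyperedges leaving $\mathcal{I}$. A routine enumeration of source-sink cuts in this auxiliary graph reproduces exactly the constraint family~\eqref{eq:ExpNewLP}. The multicast capacity theorem of~\cite{ACLY:2000,LYC:2003} then yields sufficiency, and random linear network coding over a sufficiently large field realizes the schedule with probability tending to $1$ in the field size.

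The principal obstacle is checking that the cut enumeration in the auxiliary network matches~\eqref{eq:ExpNewLP} on the nose, in particular that the adversarial maximization over $\mathcal{I}$ arises cleanly from the simultaneous erasure of the hyperedges leaving $\mathcal{I}$; this is straightforward but tedious, which is presumably why the authors omit it. A secondary issue is converting the deterministic feasibility statement to the high-probability one under random packet distribution, which reduces to concentration of the intersection cardinalities $|\bigcap \overline{X}_{i,\mathcal{I}}|$ around their means under the i.i.d.\ Bernoulli$(\alpha)$ model; standard Chernoff bounds and a union bound over the finitely many $(\mathcal{N},\mathcal{I})$ pairs suffice.
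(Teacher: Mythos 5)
The paper itself gives no proof of Theorem~\ref{thm:NewLP} --- the authors state it as a ``straightforward generalization'' of LP~\eqref{eq:LP} and explicitly present it without proof --- so your proposal fills an omission rather than paralleling an existing argument, and the cut-set-plus-multicast template you chose is exactly the one the authors implicitly rely on. Your converse is essentially sound, though the phrase ``can only be delivered by a client in $\mathcal{N}\cup\mathcal{I}$'' should be replaced by the standard information-theoretic cut argument: once the transmissions of $\mathcal{I}$ are erased, every transmission of a client in $\overline{\mathcal{N}}\setminus\mathcal{I}$ is a deterministic function of $\cup_{i\in\overline{\mathcal{N}}\setminus\mathcal{I}}X_i$ and of the symbols received from $\mathcal{N}$, hence carries no information about the packets in $\bigcap_{i\in\overline{\mathcal{N}}\setminus\mathcal{I}}\overline{X}_{i,\mathcal{I}}$ beyond what $\mathcal{N}$ sends; coded schemes do not deliver packets one at a time, so the per-packet phrasing is not by itself a proof. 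Your observation that constraints with $|\mathcal{N}|>P$ are vacuous is correct and explains the range restriction.

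The one genuine gap is in the achievability step. As you set it up, sink $t_{\mathcal{I},j}$ demands $\cup_{k\notin\mathcal{I}}X_k$, which varies with $\mathcal{I}$; different sinks therefore demand different message sets, which makes this a multi-session network coding problem, and the min-cut sufficiency of~\cite{ACLY:2000,LYC:2003} does \emph{not} apply to such instances in general. You must first equalize the demands, e.g.\ by granting each sink $t_{\mathcal{I},j}$ the packets held exclusively by clients in $\mathcal{I}$, namely $\cup_{k\in\mathcal{I}}X_k\setminus\cup_{k\notin\mathcal{I}}X_k$, as free side information over an infinite-capacity edge in addition to $X_j$, so that every sink demands all of $X$ and the instance becomes a single multicast; the identity $\overline{X}_{i,\mathcal{I}}=\overline{X}_i\setminus\bigl(\cup_{k\in\mathcal{I}}X_k\setminus\cup_{k\notin\mathcal{I}}X_k\bigr)$, which the paper uses in the proof of Lemma~\ref{lem:SLLNResults}, is precisely what makes the cut values of the augmented network collapse to the right-hand sides of~\eqref{eq:ExpNewLP}. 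Two smaller points: the theorem equates the minimum number of transmissions with the LP optimum, so you also need the LP optimum to be attained at a point whose coordinates are multiples of $1/P$ --- this is where the $P$-divisibility assumption with $P=N-M-1$ enters, and it is witnessed by the explicit optima of Theorems~\ref{thm:OverConstrainedReducedLPSolution} and~\ref{thm:SCOverConstrainedReducedLPSolution}; and your closing remark on Chernoff bounds is not needed here, since that concentration machinery belongs to the proofs of the later theorems rather than to the LP characterization itself.
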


%For every subset $\mathcal{I}$, there exist exponentially many constraints (due to the existence of an exponential number of subsets $\mathcal{S}$). Moreover, there exist exponentially many candidate subsets $\mathcal{I}$. Thus, LP~\eqref{eq:DblExpNewLP} involves a doubly exponential number of constraints. However, LP~\eqref{eq:DblExpNewLP} can be rewritten as the following LP with exponentially many constraints: 

Our goal is to solve LP~\eqref{eq:ExpNewLP}. It is noteworthy that LP~\eqref{eq:LP}, which is a special case of LP~\eqref{eq:ExpNewLP} when $M=0$, was previously given a closed-form exact solution in~\cite{CXW:2010}: %(hereafter, we will occasionally drop the term ``exact,'' unless there is a danger of confusion.):  

\begin{theorem} \cite[Theorem~4]{CXW:2010} \label{thm:LPSolution} 
$\{\tilde{r}_i\}$ is an exact solution to LP~\eqref{eq:LP}: 
\[
\tilde{r}_i = \sum_{j=1}^{N}\frac{1}{N-1} \left|\overline{X}_j\right|-\left|\overline{X}_i\right|, \hspace{5pt} 1\leq i\leq N.
\] 
\end{theorem}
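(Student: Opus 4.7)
The plan is to prove optimality of $\{\tilde{r}_i\}$ by matching an upper bound (feasibility at the stated total transmission count) with a lower bound derived from a small subset of LP constraints, and then to handle the feasibility step through concentration of the random set sizes $|\overline{X}_i|$. The guiding intuition is that, with high probability, the $N$ ``singleton-complement'' constraints (those with $|\overline{\mathcal{N}}|=1$) already pin down the LP optimum, and the remaining exponentially many constraints are dominated.

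For the lower bound, I would sum the $N$ constraints indexed by $\mathcal{N} = [N]\setminus\{j\}$ for $j\in[N]$. Each variable $r_i$ appears in exactly $N-1$ of these, so the sum yields $(N-1)\sum_i r_i \geq \sum_j |\overline{X}_j|$, giving $\sum_i r_i \geq \frac{1}{N-1}\sum_j |\overline{X}_j|$ for every LP-feasible schedule. Equivalently, the dual assignment $y_{[N]\setminus\{j\}}=\frac{1}{N-1}$ for each $j$ (zero elsewhere) is dual-feasible and attains this value. A direct algebraic simplification gives $\sum_i \tilde{r}_i = \frac{1}{N-1}\sum_j |\overline{X}_j|$, so $\{\tilde{r}_i\}$ matches the lower bound; it remains only to verify primal feasibility together with $\tilde{r}_i\geq 0$.

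For feasibility, the ``singleton-complement'' case is immediate: $\sum_{i\neq j}\tilde{r}_i = \sum_i\tilde{r}_i - \tilde{r}_j = |\overline{X}_j|$, so these constraints hold with equality. For any other $\mathcal{N}$, write $n=|\mathcal{N}|$ and $\overline{n}=N-n\geq 2$; the LHS $\sum_{i\in\mathcal{N}}\tilde{r}_i = \frac{n}{N-1}\sum_j|\overline{X}_j| - \sum_{i\in\mathcal{N}}|\overline{X}_i|$ is an explicit linear functional of the $|\overline{X}_k|$. Standard Chernoff bounds applied to the i.i.d.\ Bernoulli$(\alpha)$ indicators yield the simultaneous concentrations $|\overline{X}_k| = (1+o(1))\,K(1-\alpha)$ for every $k$ and $|\bigcap_{i\in\overline{\mathcal{N}}}\overline{X}_i| = (1+o(1))\,K(1-\alpha)^{\overline{n}}$ for every subset, w.h.p.\ via a union bound over the $O(2^N)$ relevant sets. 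Substituting these, the constraint reduces to the deterministic inequality $\frac{N-\overline{n}}{N-1} \geq (1-\alpha)^{\overline{n}-1}$, and an analogous computation shows $\tilde{r}_i \approx \frac{K(1-\alpha)}{N-1} \geq 0$ simultaneously w.h.p.

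The main obstacle is verifying the deterministic inequality $\frac{N-\overline{n}}{N-1}\geq (1-\alpha)^{\overline{n}-1}$ for $\overline{n}\geq 2$: it holds with equality at $\overline{n}=1$, and for larger $\overline{n}$ it must be checked under the parameter regime of the random packet distribution implicit in CXW:2010, which is naturally consistent with the regime in which universal recovery is feasible at all (so $\alpha$ is not vanishingly small). Once this is in hand, concentration together with the matching lower bound yields that $\{\tilde{r}_i\}$ is both feasible and cost-optimal with probability tending to $1$ as $K\to\infty$, establishing the theorem.
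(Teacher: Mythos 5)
Your overall strategy --- take the $N$ constraints indexed by $\mathcal{N}=[N]\setminus\{j\}$, solve them with equality, certify optimality via the uniform dual weights $\frac{1}{N-1}$ on those same constraints, and dispose of the exponentially many remaining constraints by concentration --- is precisely the three-step technique the paper attributes to \cite{CXW:2010} and then reuses for its own theorems~\ref{thm:OverConstrainedReducedLPSolution} and~\ref{thm:SCOverConstrainedReducedLPSolution} (the paper states Theorem~\ref{thm:LPSolution} itself without proof, as a citation, but its Section~\ref{sec:Proofs} carries out the analogous argument in the generalized setting). The lower-bound half of your argument is correct and complete: summing the singleton-complement constraints gives $(N-1)\sum_i r_i\geq\sum_j|\overline{X}_j|$ for every feasible point, $\sum_i\tilde{r}_i$ attains this bound, and those $N$ constraints are tight at $\{\tilde{r}_i\}$.

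The gap is in the feasibility step. You compute the limits of $\frac{1}{K}|\overline{X}_k|$ and $\frac{1}{K}|\bigcap_{i\in\overline{\mathcal{N}}}\overline{X}_i|$ under the \emph{unconditional} Bernoulli$(\alpha)$ model, obtaining $(1-\alpha)$ and $(1-\alpha)^{\overline{n}}$, and reduce the generic constraint to $\frac{N-\overline{n}}{N-1}\geq(1-\alpha)^{\overline{n}-1}$. That inequality is genuinely false for small $\alpha$: for $N=3$ and $\overline{n}=2$ it reads $\frac{1}{2}\geq 1-\alpha$ and fails whenever $\alpha<\frac{1}{2}$. Your proposed escape --- restricting to a regime where $\alpha$ is ``not vanishingly small'' --- is not available, since the result is asserted for every $0<\alpha<1$. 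What actually saves the argument is the normalization $X=\cup_i X_i$ built into the problem setup: the packet distribution is the i.i.d.\ Bernoulli law \emph{conditioned} on each packet being held by at least one client, so packets held by nobody (which dominate when $\alpha$ is small and inflate your intersection counts) are excluded. Under that conditioning the correct limits are $Z_{0,V}=\frac{(1-\alpha)^{-V}-1}{(1-\alpha)^{-N}-1}$ with $V=|\mathcal{N}|$ (Lemma~\ref{lem:SLLNResults} with $M=0$), the constraint for $|\mathcal{N}|=V$ reduces to $\frac{V}{N-1}\geq\frac{Z_{0,V}}{Z_{0,N-1}}$, and this holds strictly for \emph{every} $0<\alpha<1$: it is the $M=0$ case of~\eqref{eq:VPIneq}, which the paper proves by showing that $\varphi(\alpha)=\frac{(1-\alpha)^{P-V}-(1-\alpha)^{P}}{1-(1-\alpha)^{P}}$ decreases from $\frac{V}{P}$ to $0$ on $(0,1)$. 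With that substitution your proof closes; without it, the deterministic inequality you reduce to is simply wrong on part of the parameter range.
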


\vspace{0.25cm}
The following summarizes (into three steps) the technique which was previously used to solve LP~\eqref{eq:LP}: 
\begin{itemize}
\item[(i)] Choose the set of $N$ constraints in~\eqref{eq:LP} corresponding to the $N$ subsets $\{\mathcal{N}\subset [N]$: $|\mathcal{N}| = N-1\}$.
\item[(ii)] Solve the system of $N$ linear equations corresponding to the $N$ constraints of step (i) for the $N$ unknowns $\{r_i\}$ (where the inequalities are replaced with equality). 
\item[(iii)] Show the feasibility and optimality of the solution of step (ii) with respect to the rest of the constraints in~\eqref{eq:LP}.	
\end{itemize}

Now, a natural question is whether we can use such a deceptively simple, yet remarkably powerful, technique to solve LP~\eqref{eq:ExpNewLP}. The answer is positive, yet as we will show later the two steps (i) and (iii) require a significant amount of non-trivial modifications to become applicable to our problem. The complication comes from the fact that in our case, as opposed to the case with no unreliable clients, a ``proper'' choice of constraints in the step (i), yielding a solution in the step (ii) which is satisfactory with respect to the requirements in the step (iii), is not obvious. Also, it is not clear whether in our case such a proper choice of the constraints always exists. The following theorems summarize our main results. 

For the ease of exposition, we define 
%\begin{equation*}
%\left|\overline{X}_{j,{\mathcal{\overline{N}}\setminus \{j\}}}\right|\geq \left|\overline{X}_{i,{\mathcal{\overline{N}}\setminus \{i\}}}\right|,	\hspace{5pt} 1\leq j\leq N-M,
%\end{equation*} for all $i\in\overline{\mathcal{N}}$ and all $\{\mathcal{N}:|\mathcal{N}|=P,\{i\}_{1\leq i<j}\in\mathcal{N}\}$, and  
\begin{eqnarray}\label{eq:kidef}
&& k_j = \max\left|\overline{X}_{i,{\mathcal{\overline{N}}\setminus \{i\}}}\right|, \hspace{5pt} 1\leq j\leq N-M,
\end{eqnarray} where the maximization is over all $i\in \mathcal{\overline{N}}$ and all $\{\mathcal{N}\subset [N]:|\mathcal{N}|=P, \{n\}_{1\leq n<j}\in\mathcal{N}\}$, and 
\begin{equation}\label{eq:kijdef}
k_{i,j}=\left|\overline{X}_{i,\{j\}}\right|, \hspace{5pt} 1\leq i\neq j\leq N.	
\end{equation} 

%We assume, without loss of generality, 
%\begin{equation}\label{eq:Assum1}
%k_i\geq k_j, \hspace{5pt} 1\leq i<j\leq N-M,
%\end{equation} and 
%\begin{equation}\label{eq:Assum2}
%k_{i,j}\geq k_{j,i}, \hspace{5pt} 1\leq i< j\leq N,
%\end{equation} in Theorem~\ref{thm:OverConstrainedReducedLPSolution} and Theorem~\ref{thm:SCOverConstrainedReducedLPSolution}, respectively (otherwise, we can always re-label the clients such that either of these assumptions holds true).

\begin{theorem}[Approximate Solution]\label{thm:OverConstrainedReducedLPSolution}
Assume that the clients are re-labeled such that $k_i\geq k_j$, $1\leq i<j\leq N-M$.\footnote{According to~\eqref{eq:kidef}, it is easy to see that such a re-labeling always exists.} Let $Q\geq 1$ and $0\leq R \leq M$ be some integers such that $N = (M+1)(Q+1)-R$. Then, $\{\tilde{r}_i\}$ is an approximate solution to LP~\eqref{eq:ExpNewLP}:  	
\begin{equation}\label{eq:Tilderis}
\tilde{r}_i=\left\{
\begin{array}{ll}
\tilde{r}-k_i, & 1\leq i\leq Q,\\ 
\tilde{r}-k_{Q+1}, & Q< i\leq N,
\end{array} 
\right.\end{equation} where 
\begin{equation}\label{eq:Tilder}
\tilde{r} = \sum_{i=1}^{Q}\frac{1}{P}k_i+\frac{P-Q+1}{P}k_{Q+1}.
\end{equation} Moreover, 
\begin{dmath}\label{eq:OptimalValuePrimal} \sum_{i=1}^{N} \tilde{r}_i = \sum_{i=1}^{Q} \left(\frac{N-P}{P}\right)k_i+\left(\frac{N+Q(P-N)}{P}\right)k_{Q+1}.
\end{dmath}
\end{theorem}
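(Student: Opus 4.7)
The plan is to follow the three-step blueprint used in~\cite{CXW:2010} to solve LP~\eqref{eq:LP}, as outlined in the paragraph preceding the statement. Since constraints with $|\mathcal{N}|=N-1$ are no longer available when $M>0$ (LP~\eqref{eq:ExpNewLP} only ranges over $|\mathcal{N}|\leq P$), step~(i) requires a substantially different selection. The plan is to choose a family of $Q+1$ constraints, each of size $|\mathcal{N}|=P$, whose right-hand sides realize the quantities $k_1,\ldots,k_{Q+1}$ defined in~\eqref{eq:kidef}. By the re-labeling hypothesis, there exist sets $\mathcal{N}_1,\ldots,\mathcal{N}_{Q+1}\subset[N]$ with $|\mathcal{N}_j|=P$, $\{1,\ldots,j-1\}\subset\mathcal{N}_j$, and RHS exactly $k_j$; this choice is always available.

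In step~(ii), the plan is to treat the selected $Q+1$ inequalities as equalities and impose the piecewise ansatz $r_i=\tilde{r}-k_i$ for $i\leq Q$ and $r_i=\tilde{r}-k_{Q+1}$ for $i>Q$, motivated by the fact that the tail clients $\{Q+1,\ldots,N\}$ play a symmetric role in the chosen family, sharing the common attributed parameter $k_{Q+1}$. The reduced system then has the unique solution~\eqref{eq:Tilderis}--\eqref{eq:Tilder}, and direct algebra using $N=(M+1)(Q+1)-R$ yields the sum identity~\eqref{eq:OptimalValuePrimal}.

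The main obstacle lies in step~(iii), which must establish both feasibility and approximate optimality. For feasibility, every constraint of LP~\eqref{eq:ExpNewLP} --- not only the $Q+1$ selected ones --- must be satisfied by $\{\tilde{r}_i\}$ with probability $1-o(1)$ as $K\to\infty$. Under the random packet distribution, $|\overline{X}_{i,\mathcal{I}}|$ concentrates around $(1-\alpha^{|\mathcal{I}|+1})K$ up to $O(\sqrt{K\log K})$ fluctuations, so the $k_j$'s concentrate about their expectations, which all agree to leading order in $K$. The plan is then to combine the maximality built into~\eqref{eq:kidef}, which bounds the RHS of every other $|\mathcal{N}|=P$ constraint by the appropriate $k_j$, with a monotonicity/concentration argument for the $|\mathcal{N}|<P$ constraints whose RHSs are much smaller (in expectation) than those of size $P$. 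For approximate optimality, the plan is to exhibit a dual-feasible assignment $\{y_{\mathcal{N}_j}\}$ supported on the selected constraints whose dual objective agrees with~\eqref{eq:OptimalValuePrimal} up to an additive $\epsilon K$ term; weak LP duality then certifies the definition of an approximate solution. The decomposition $N=(M+1)(Q+1)-R$ with $0\leq R\leq M$ is exactly what lets this dual satisfy the coverage bound $\sum_{\mathcal{N}_j\ni i} y_{\mathcal{N}_j}\leq 1$, with the residual $R$ producing the non-vanishing $\epsilon$-gap permitted by the definition.
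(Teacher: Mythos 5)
Your steps (i) and (ii) match the paper's route in spirit: the paper also restricts to $|\mathcal{N}|=P$ constraints, over-constrains by replacing each right-hand side with the dominating quantity $k_j$ from~\eqref{eq:kidef}, and recovers~\eqref{eq:Tilderis}--\eqref{eq:Tilder} by solving the $Q+1$ tight equalities under the ansatz $\tilde{r}_{Q+1}=\dots=\tilde{r}_N$. The genuine gap is in your optimality certificate. You propose a dual assignment supported on only the $Q+1$ selected constraints. This cannot work in general: the dual objective must reach $\sum_i\tilde{r}_i\approx\frac{N}{P}Z_{M,P}K$, which forces $\sum_j y_{\mathcal{N}_j}=\frac{N}{P}$ while every element $i$ satisfies $\sum_{\mathcal{N}_j\ni i}y_{\mathcal{N}_j}\leq 1$; summing the latter over $i$ shows equality must hold for every $i$, i.e., the $Q+1$ sets with these weights must fractionally partition $[N]$. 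Already for $M=1$, $Q=2$, $R=1$ (so $N=5$, $P=3$) the required weights are $y_1=y_2=\tfrac{2}{3}$, $y_3=\tfrac{1}{3}$, but $\mathcal{N}_1$ and $\mathcal{N}_2$ each contain at least two elements of $\{3,4,5\}$ and hence share an element $i$ with $\sum_{\mathcal{N}_j\ni i}y_{\mathcal{N}_j}\geq\tfrac{4}{3}>1$. The paper instead builds a dual certificate on $N$ subsets, partitioned as $\mathbb{S}^{(1)},\dots,\mathbb{S}^{(Q+1)}$ with $|\mathbb{S}^{(j)}|=M+1$ for $j\leq Q$ and $|\mathbb{S}^{(Q+1)}|=M-R+1$, all carrying weight $\tfrac{1}{P}$ and arranged (by a nontrivial greedy balancing argument) so that every element lies in exactly $P$ of them; the groups share the common value $k_j$, which is why the dual objective collapses to~\eqref{eq:OptimalValuePrimal}. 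Relatedly, your attribution of the $\epsilon$-gap to the residual $R$ is off: the duality gap of the over-constrained LP is exactly zero, and the $\epsilon$ arises solely from the concentration fluctuations between the $k_j$'s and the true maxima.

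Two further points need repair. First, the feasibility of the $|\mathcal{N}|=V<P$ constraints is not a soft ``much smaller RHS'' observation: since $\sum_{i=1}^{V}\tilde{r}_i\approx\frac{V}{P}Z_{M,P}K$, one must prove the strict inequality $\frac{V}{P}>\frac{Z_{M,V}}{Z_{M,P}}$ for all $0<\alpha<1$, which the paper establishes through a dedicated monotonicity analysis of $\varphi(\alpha)=\frac{(1-\alpha)^{P-V}-(1-\alpha)^{P}}{1-(1-\alpha)^{P}}$; without the strictness there is no room to absorb the concentration error $\epsilon$. Second, your stated concentration value $(1-\alpha^{|\mathcal{I}|+1})K$ is not the correct limit; the relevant quantity is $Z_{M,V}=\frac{(1-\alpha)^{-M-V}-(1-\alpha)^{-M}}{(1-\alpha)^{-N}-1}$ of Lemma~\ref{lem:SLLNResults}, and the correctness of the whole step-(iii) argument hinges on its exact form.
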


\begin{proof}The proof is given in Section~\ref{subsec:ApproximateThm}.
\end{proof}

\vspace{0.25cm}
\begin{theorem}[Exact Solution]\label{thm:SCOverConstrainedReducedLPSolution}
Assume that the clients are re-labeled such that $k_{i,j}\geq k_{j,i}$, $1\leq i< j\leq N$.\footnote{The re-labeling procedure is as follows: for each $n$, starting from $1$ and ending at $N-1$, switch the labels of clients $n$ and $n+1$ if and only if $k_{n,n+1}<k_{n+1,n}$. The proof is straightforward and follows from the fact that for any $i,j,l$, if $k_{i,j}\geq k_{j,i}$ and $k_{j,l}\geq k_{l,j}$, then $k_{i,l}\geq k_{l,i}$.} Let $Q\geq 1$ and $0\leq R \leq 1$ be some integers such that $N = 2(Q+1)-R$. Then, $\{\tilde{r}_i\}$ is an exact solution to LP~\eqref{eq:ExpNewLP} when $M=1$: 
\begin{dmath}\label{eq:SCtilder}\tilde{r}_i=\left\{
\begin{array}{ll}
\tilde{r}-k_{i,N}-k_{1,N-Q}, & 1\leq i\leq N-Q,\\ 
\tilde{r}-k_{N-Q,N}-k_{1,i}, & N-Q<i\leq N,
\end{array} 
\right.\end{dmath} where 
\begin{dmath}\tilde{r}=\frac{1}{P}\sum_{i=1}^{N-Q-1} k_{i,N}+\frac{Q}{P}k_{N-Q,N}+\frac{1}{P}\sum_{i=N-Q+1}^{N-1}k_{1,i}+\frac{N-Q-1}{P}k_{1,N-Q}. 
\end{dmath} Moreover, 
\begin{dmath}\label{eq:SCOptimalValuePrimal}\sum_{i=1}^{N} \tilde{r}_i= \frac{2-P}{P}k_{1,N}+\sum_{i=2}^{N-Q-1}\frac{2}{P}k_{i,N}+\frac{R}{P}k_{N-Q,N}+\sum_{i=N-Q+1}^{N-1}\frac{2}{P}k_{1,i}+\frac{2-R}{P}k_{1,N-Q}.
\end{dmath}
\end{theorem}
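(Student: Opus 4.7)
The plan is to apply the three-step technique outlined in Section~\ref{sec:MainResults}, but with a more careful choice of tight constraints that exploits the $M=1$ structure. When $M=1$, the right-hand side of \eqref{eq:ExpNewLP} for $|\overline{\mathcal{N}}|=2$, say $\overline{\mathcal{N}}=\{a,b\}$ with $a<b$, collapses to $\max\{k_{a,b},k_{b,a}\}=k_{a,b}$ by the relabeling hypothesis and the transitivity recorded in the footnote (which produces a total order on the clients). Constraints with $|\overline{\mathcal{N}}|\geq 3$ have intersection-based right-hand sides, which under the random packet distribution concentrate to values strictly dominated by their pair counterparts, and will be shown to be implied by the pair constraints asymptotically as $K\to\infty$.

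For Step~(i), I would take as the candidate tight set the following $N$ pair constraints: $\overline{\mathcal{N}}=\{i,N\}$ for $1\leq i\leq N-Q-1$; $\overline{\mathcal{N}}=\{1,j\}$ for $N-Q+1\leq j\leq N-1$; and the two ``bridge'' pairs $\overline{\mathcal{N}}=\{N-Q,N\}$ and $\overline{\mathcal{N}}=\{1,N-Q\}$. This selection is dictated by the piecewise form of~\eqref{eq:SCtilder}: it is the edge set of two stars---one centered at $N$ and one centered at $1$---joined at the pivot client $N-Q$, which matches the claimed solution and guarantees that the coefficient matrix has full rank.

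For Step~(ii), writing each chosen constraint as $S-r_a-r_b=k_{a,b}$ with $S=\sum_l r_l$, I would first subtract equations within each star to express the differences $r_i-r_j$ in closed form, then combine the two bridge equations to obtain $r_1-r_N=k_{N-Q,N}-k_{1,N-Q}$. Summing all the equations and reconciling with $\sum_l r_l=S$ determines $S=\tilde{r}$; back-substitution produces~\eqref{eq:SCtilder}, and summing~\eqref{eq:SCtilder} directly yields~\eqref{eq:SCOptimalValuePrimal}.

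Step~(iii) is the principal obstacle. For primal feasibility of pair constraints $\overline{\mathcal{N}}=\{a,b\}$ ($a<b$) outside the tight set, I would verify $S-\tilde{r}_a-\tilde{r}_b\geq k_{a,b}$ by propagating $k_{i,j}\geq k_{j,i}$ along the total order on the clients. For constraints with $|\overline{\mathcal{N}}|\geq 3$, I would use the concentration of intersection sizes under the random packet distribution: the right-hand side is approximately $K(1-\alpha)^{|\overline{\mathcal{N}}|-1}(1-(1-\alpha)^{|\mathcal{N}|})$ whp, while the left-hand side $\sum_{i\in\mathcal{N}}\tilde{r}_i$ concentrates around $|\mathcal{N}|$ times the mean of $\tilde{r}_i$, and a direct comparison yields the required inequality. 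For optimality, I would exhibit nonnegative dual multipliers $\{\lambda_k\}$ supported on the $N$ chosen constraints and satisfying $\sum_{k:\,i\notin\overline{\mathcal{N}}_k}\lambda_k=1$ for each client~$i$. Because the chosen pairs form two stars glued at $N-Q$, this dual system decouples into triangular subsystems whose unique closed-form solution is manifestly nonnegative; matching its objective to~\eqref{eq:SCOptimalValuePrimal} then certifies optimality via strong LP duality and complementary slackness.
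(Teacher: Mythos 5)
Your overall three-step plan matches the paper's strategy, and your choice of the $N$ tight constraints (the two stars centered at clients $N$ and $1$, glued at the pivot $N-Q$) is exactly the set used in Lemma~\ref{lem:SCFeasibility}; the derivation of $\tilde{r}$ and of~\eqref{eq:SCOptimalValuePrimal} from that system is fine. However, there is a genuine gap in your optimality argument. You propose a dual certificate supported on those same $N$ two-star constraints. Complementary slackness forces every client $i$ to be covered by pairs of total weight $2/P$; but in your tight set each client $i$ with $2\leq i\leq N-Q-1$ lies in the single pair $\{i,N\}$, which pins $s_{\{i,N\}}=2/P$, and then the equation for client $N$ reads $\sum_{i\leq N-Q}s_{\{i,N\}}=2/P$ while the already-fixed terms alone contribute $(N-Q-2)\cdot 2/P$. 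For $Q\geq 3$ (so $N-Q\geq 4$) this forces negative multipliers, so no nonnegative dual solution supported on your tight set exists; the system is not "manifestly nonnegative." The paper escapes this by observing that \emph{many more} constraints are tight at $\{\tilde{r}_i\}$ than the $N$ you solved: by the modularity identity of Lemma~\ref{lem:SCkPairs}, $k_{m,n}=k_{m,N}+k_{1,n}-k_{1,N}=\lambda_{m,n}$ for every pair with $m\leq N-Q<n$, and all of these constraints hold with equality. The dual certificate is then a fractional covering design drawn from this larger family (the four-case construction of Lemma~\ref{lem:SCOptimality}, which depends on the parity of $Q$ and on $R$), arranged so that each client appears in exactly $P$ of $N$ chosen pairs. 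Without this enlargement of the tight set your duality step cannot be completed.

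A second, smaller gap: your feasibility check for the non-tight pair constraints cannot be carried out by "propagating $k_{i,j}\geq k_{j,i}$ along the total order" alone. What is actually needed is $\sum_{i\in\mathcal{N}_{m,n}}\tilde{r}_i=\lambda_{m,n}\geq k_{m,n}$, and proving $\lambda_{m,n}\geq k_{m,n}$ for pairs on the same side of the pivot requires both the exact identity of Lemma~\ref{lem:SCkPairs} and the inequality of Lemma~\ref{lem:SCkPairs2}, not just the ordering hypothesis. Finally, for constraints with $|\mathcal{N}|=V<P$, your "direct comparison" conceals the one genuinely analytic step of the proof, namely the inequality $V/P>Z_{1,V}/Z_{1,P}$ for all $0<\alpha<1$ (established via the monotonicity argument in Lemma~\ref{lem:FeasibilityFinal}); also note that the concentration constants carry the normalization $\bigl((1-\alpha)^{-N}-1\bigr)^{-1}$ coming from conditioning on $X=\cup_i X_i$, which your stated approximation omits (harmless for the ratio, but worth recording).
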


\vspace{0.25cm}
\begin{proof}The proof is given in Section~\ref{subsec:ExactThm}.
\end{proof}

\section{Proofs}\label{sec:Proofs}
In this section, we give the proofs of theorems~\ref{thm:OverConstrainedReducedLPSolution} and~\ref{thm:SCOverConstrainedReducedLPSolution}. Before giving the proofs, for the ease of exposition we state a few definitions. Consider a generic LP as follows:
\begin{eqnarray}\label{eq:GenLP}
\mathrm{minimize} && \sum_{i=1}^{N} r_i, \\ \nonumber
\mbox{s.t.} && \sum_{i\in\mathcal{N}} r_i\geq f(\{X_i\};\mathcal{N}), \hspace{5pt}\forall \emptyset \subsetneq{\mathcal{N}}\subsetneq [N],
\end{eqnarray} where $f(\{X_i\};\mathcal{N})$ is an arbitrary function of $\{X_i\}$ and $\mathcal{N}$. The following definitions are with respect to LP~\eqref{eq:GenLP}:

\begin{definition}
A sequence $\{r_i\}$ is said to be \emph{feasible} if it satisfies the constraints for a random instance $\{X_i\}$ w.p.~approaching $1$ as $K\rightarrow\infty$. 
\end{definition}

\begin{definition}
A sequence $\{r_i\}$ is said to be \emph{optimal} if $\sum_{i} r_i$ is equal to the optimal value. Moreover, $\{r_i\}$ is said to be a \emph{solution} if it is feasible and optimal. 
\end{definition}

\begin{definition}
A sequence $\{r_i\}$ is said to be \emph{near-optimal} if $\frac{1}{K}\sum_{i} r_i$ is within $\epsilon$ of the optimal value normalized by $K$, for any $\epsilon>0$, w.p.~approaching $1$ as $K\rightarrow\infty$. Moreover, $\{r_i\}$ is said to be an \emph{approximate solution} if it is feasible and near-optimal.  
\end{definition}

%The following observations will be useful: 
%\begin{itemize}
%\item $P\geq 0$ since $P=N-M-1$ and $N>M$ (by definition).
%\item $P\geq Q$ since $P\geq ((M+1)(Q+1)-M)-M-1=M(Q-1)+Q$, $M\geq0$ and $Q\geq 1$ (by definition).	
%\end{itemize}

\subsection{Proof of Theorem~\ref{thm:OverConstrainedReducedLPSolution}}\label{subsec:ApproximateThm}
Consider a reduced version of LP~\eqref{eq:ExpNewLP} as follows: 
\begin{eqnarray}\label{eq:FinalExpNewLP}
\mathrm{minimize} && \sum_{i=1}^{N} r_i, \\ \nonumber
\mbox{s.t.} && \sum_{i\in\mathcal{N}} r_i\geq \max_{i\in \mathcal{\overline{N}}}\left|\overline{X}_{i,{\mathcal{\overline{N}}\setminus \{i\}}}\right|, \hspace{5pt}\forall\{\mathcal{N}:|\mathcal{N}|=P\}. 
\end{eqnarray} (From now on, we adopt the notation $\mathcal{N}$ to represent an arbitrary subset of $[N]$, unless otherwise stated.)

For arbitrary $M$, no closed-form solution to LP~\eqref{eq:FinalExpNewLP} is known. (However, we will give a closed-form solution to LP~\eqref{eq:FinalExpNewLP} later for the case of $M=1$.) We, instead, give a closed-form solution to LP~\eqref{eq:ReducedExpNewLP}, which we will construct by over-constraining LP~\eqref{eq:FinalExpNewLP}. Next, we show that our solution to LP~\eqref{eq:ReducedExpNewLP} is an approximate solution to LP~\eqref{eq:FinalExpNewLP}, and subsequently, LP~\eqref{eq:ExpNewLP}, which was to be solved ultimately. 

We construct LP~\eqref{eq:ReducedExpNewLP} by replacing $\max_{i\in\mathcal{\overline{N}}}|\overline{X}_{i,{\overline{\mathcal{N}}\setminus\{i\}}}|$ with $k_j$ (given by~\eqref{eq:kidef}) in LP~\eqref{eq:FinalExpNewLP}: 
\begin{eqnarray}\label{eq:ReducedExpNewLP}
\mathrm{minimize} && \sum_{i=1}^{N} r_i, \\ 
\nonumber \mbox{s.t.} && \sum_{i\in\cal{N}} r_i\geq k_j, \hspace{5pt} \forall 1\leq j\leq N-M,\\ 
\nonumber && \forall \{\mathcal{N}\subset [N]\setminus\{j\}:|\mathcal{N}|= P,\{i\}_{1\leq i<j}\in\mathcal{N}\}.
\end{eqnarray} (By~\eqref{eq:kidef}, $k_j\geq \max_{i\in\mathcal{\overline{N}}}|\overline{X}_{i,{\overline{\mathcal{N}}\setminus\{i\}}}|$, for all $\{\mathcal{N}:|\mathcal{N}|=P, \{i\}_{1\leq i<j}\in\mathcal{N}\}$.) 

\begin{lemma}\label{lem:ReducedExpNewLPSolution}
$\{\tilde{r}_i\}$ is an exact solution to LP~\eqref{eq:ReducedExpNewLP}.	
\end{lemma}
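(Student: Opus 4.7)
I plan to apply the three-step technique recalled after Theorem~\ref{thm:LPSolution}, suitably adapted to the asymmetric right-hand side of LP~\eqref{eq:ReducedExpNewLP}: identify a family of tight constraints at the candidate $\tilde{r}$, verify feasibility with respect to all remaining constraints, and then establish optimality via LP duality. The key structural fact driving feasibility is monotonicity: under the re-labeling $k_1\geq k_2\geq\cdots\geq k_{Q+1}$, the definition of $\tilde{r}_i$ in~\eqref{eq:Tilderis} gives $\tilde{r}_1\leq\cdots\leq\tilde{r}_Q\leq\tilde{r}_{Q+1}=\cdots=\tilde{r}_N$. For any admissible $(j,\mathcal{N})$ with $|\mathcal{N}|=P$, $\{1,\dots,j-1\}\subset\mathcal{N}$, and $j\notin\mathcal{N}$, monotonicity implies that $\sum_{i\in\mathcal{N}}\tilde{r}_i$ is minimized by the canonical choice $\mathcal{N}_j\doteq\{1,\dots,P+1\}\setminus\{j\}$, which picks the smallest allowed indices from $\{j+1,\dots,N\}$. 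Substituting~\eqref{eq:Tilderis} and~\eqref{eq:Tilder} into the minimum and telescoping yields $\sum_{i\in\mathcal{N}_j}\tilde{r}_i=k_j$ for $j\leq Q$ and $\sum_{i\in\mathcal{N}_j}\tilde{r}_i=k_{Q+1}$ for $j\geq Q+1$; since $k_j\leq k_{Q+1}$ whenever $j>Q+1$, every constraint is satisfied, and $\{(j,\mathcal{N}_j)\}_{j=1}^{Q+1}$ is simultaneously exhibited as a family of tight constraints.

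For optimality, I would invoke LP duality. The dual of LP~\eqref{eq:ReducedExpNewLP} has one non-negative variable $y_{j,\mathcal{N}}$ per primal constraint, one equality $\sum_{(j,\mathcal{N}):\,i\in\mathcal{N}}y_{j,\mathcal{N}}=1$ per index $i$, and objective $\sum_{j,\mathcal{N}}y_{j,\mathcal{N}}k_j$. By complementary slackness, I would support $y$ only on the tight $\mathcal{N}$'s identified above. Assigning total mass $\alpha_j=(M+1)/P$ to each $j\in\{1,\dots,Q\}$ and $\alpha_{Q+1}=(M+1-R)/P$ to $j=Q+1$ makes $\sum_j \alpha_j k_j$ coincide with~\eqref{eq:OptimalValuePrimal}, via the identity $Q(M+1)-R=P$ that follows directly from $N=(M+1)(Q+1)-R$. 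Weak duality will then force $\sum r_i\geq\sum\tilde{r}_i$ for every primal feasible $r$, completing the proof of optimality.

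The hard step, I expect, will be the explicit \emph{distribution} of each $\alpha_j$ among its tight $\mathcal{N}$'s so that the equality $\sum_{(j,\mathcal{N}):\,i\in\mathcal{N}}y_{j,\mathcal{N}}=1$ holds simultaneously for every $i\in[N]$. A uniform assignment fails in general (it can already be seen to overshoot on indices $i>P+1$ in the small case $N=5$, $M=1$, $Q=2$), so the distribution must be constructed non-uniformly, leveraging the combinatorial freedom in the $\binom{N-Q}{P+1-Q}$ tight sets available for each $j\leq Q$ and the $\binom{N-Q-1}{P-Q}$ tight sets for $j=Q+1$. I anticipate that the balance equations for the dual constraints reduce, after an incidence count, to the same identity $Q(M+1)-R=P$, and that this is exactly what makes a non-negative solution exist.
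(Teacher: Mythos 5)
Your feasibility argument and your overall architecture (reduce to the canonically smallest constraints via the monotonicity $\tilde{r}_1\leq\cdots\leq\tilde{r}_Q\leq\tilde{r}_{Q+1}=\cdots=\tilde{r}_N$, verify tightness by direct substitution, then certify optimality by exhibiting a dual solution of total mass $\frac{M+1}{P}$ per index $j\leq Q$ and $\frac{M+1-R}{P}$ for $j=Q+1$) coincide with the paper's proof, which splits the claim into Lemma~\ref{lem:Feasibility} and Lemma~\ref{lem:Optimality}. Your bookkeeping is also correct: the identity $Q(M+1)-R=P$ is exactly what the paper uses (in the form $\frac{M+1}{P}=\frac{N-P}{P}$ and $\frac{M-R+1}{P}=\frac{N+Q(P-N)}{P}$) to match the dual objective with~\eqref{eq:OptimalValuePrimal}, and you correctly observe that the dual constraints for $i\leq Q$ are satisfied automatically once the per-$j$ totals are fixed.

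However, there is a genuine gap, and it sits precisely at the step you flag as the "hard step" and then leave as an anticipation: the existence of a non-negative distribution of each mass $\alpha_j$ over the tight sets so that every index $i>Q$ receives total incidence exactly $1$. This is not a consequence of the identity $Q(M+1)-R=P$ alone; that identity only guarantees that the \emph{aggregate} incidence over all $i>Q$ comes out to $N-Q$, i.e., that the balance equations are consistent on average. You correctly note that the uniform distribution overshoots on some indices, but "the combinatorial freedom is large enough" is not an argument — one must actually produce the assignment, and this is the bulk of the paper's proof of Lemma~\ref{lem:Optimality}. The paper does it constructively and greedily: it selects exactly $M+1$ subsets for each $j\leq Q$ (and $M-R+1$ for $j=Q+1$), each with weight $\frac{1}{P}$, always choosing next the admissible $\mathcal{N}$ whose elements above $Q$ currently have the smallest incidence counts $n_i$; it then verifies, via the counting function $I(m,n)=m(P-Q+1)-(n-1)(N-Q)+\xi(m)$, that after all $N$ selections $n_i=P$ for every $i$. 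To repair your proof you would need either this explicit construction or a substitute existence argument (e.g., formulating the balance equations as a transportation/flow feasibility problem over the marginals $\beta_{j,i}$ with $0\leq\beta_{j,i}\leq\alpha_j$ and $\sum_i\beta_{j,i}=(P-Q+1)\alpha_j$, and checking the cut conditions); as written, the optimality half of the lemma is asserted rather than proved.
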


\begin{proof} 
We prove the feasibility and the optimality of $\{\tilde{r}_i\}$ to LP~\eqref{eq:ReducedExpNewLP} in lemmas~\ref{lem:Feasibility} and~\ref{lem:Optimality}, respectively.	
\end{proof}

The following lemma is useful in the proof of Lemma~\ref{lem:Feasibility}. 
\begin{lemma}\label{lem:Increasingr}
$\tilde{r}_1\leq \tilde{r}_2\leq \dots\leq \tilde{r}_Q\leq \tilde{r}_{Q+1}=\dots=\tilde{r}_N$.	
\end{lemma}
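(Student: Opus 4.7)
The plan is to unpack the definition~\eqref{eq:Tilderis} directly and exploit the re-labeling hypothesis that $k_1 \ge k_2 \ge \cdots \ge k_{N-M}$. Since $\tilde{r}_i = \tilde{r} - k_i$ for $1 \le i \le Q$ and $\tilde{r}_i = \tilde{r} - k_{Q+1}$ for $Q < i \le N$, the lemma is really a statement about the ordering of the sequence $k_1, \ldots, k_{Q+1}$, with $\tilde{r}$ acting as a common shift.

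First I would handle the strictly increasing block $\tilde{r}_1 \le \tilde{r}_2 \le \cdots \le \tilde{r}_Q$. For indices $1 \le i < j \le Q$, the difference $\tilde{r}_j - \tilde{r}_i = k_i - k_j$ is non-negative by the re-labeling hypothesis, which yields the chain. Next, the constant block $\tilde{r}_{Q+1} = \cdots = \tilde{r}_N$ is immediate from the definition, since every $\tilde{r}_i$ with $i > Q$ equals the same quantity $\tilde{r} - k_{Q+1}$.

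The only bridging step is $\tilde{r}_Q \le \tilde{r}_{Q+1}$, i.e., $k_Q \ge k_{Q+1}$. This again follows from the ordering $k_1 \ge \cdots \ge k_{N-M}$, provided $Q+1$ lies in the valid range $\{1,\ldots,N-M\}$. I would therefore briefly check that the parameters $Q \ge 1$ and $0 \le R \le M$ with $N = (M+1)(Q+1) - R$ force $Q+1 \le N-M$; a short arithmetic manipulation (using $R \le M$ and $P = N-M-1 \ge 1$) confirms this, so $k_{Q+1}$ is well-defined and $k_Q \ge k_{Q+1}$ holds by hypothesis.

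I do not expect any real obstacle: the lemma is essentially a bookkeeping consequence of the hypotheses on the re-labeling and of the piecewise definition of $\{\tilde{r}_i\}$. The most delicate point—and the one worth writing out explicitly—is the range check for the index $Q+1$, since the ordering hypothesis on $k_j$ is only stated for $1 \le j \le N-M$.
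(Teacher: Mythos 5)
Your proposal is correct and follows essentially the same route as the paper: unpack the piecewise definition of $\tilde{r}_i$, observe that consecutive differences reduce to $k_{i+1}-k_i\leq 0$ by the re-labeling hypothesis, and note the tail block is constant by definition. Your explicit verification that $Q+1\leq N-M$ (so that $k_{Q+1}$ falls within the range of the ordering hypothesis) is a small point of added rigor that the paper leaves implicit, but it does not change the argument.
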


\begin{proof} By the assumption, 
\begin{equation}\label{eq:kis}
k_1\geq k_2\geq \dots\geq k_{N-M}.
\end{equation} By definition, $\tilde{r}_i=\tilde{r}_{i+1}$, $Q<i<N$. Thus, it remains to show $\tilde{r}_i-\tilde{r}_{i+1}\leq 0$, $1\leq i\leq Q$. By combining~\eqref{eq:Tilderis} and~\eqref{eq:Tilder} along with~\eqref{eq:kis}, $\tilde{r}_i-\tilde{r}_{i+1}=k_{i+1}-k_i\leq 0$, $1\leq i\leq Q$. 
\end{proof}

\begin{lemma}\label{lem:Feasibility}
$\{\tilde{r}_i\}$ is feasible with respect to LP~\eqref{eq:ReducedExpNewLP}.
\end{lemma}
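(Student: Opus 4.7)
The plan is to reduce the check of feasibility to a single ``worst case'' choice of $\mathcal{N}$ for each $j$, using the monotonicity of the $\tilde{r}_i$ already established in Lemma~\ref{lem:Increasingr}. First, I would fix $j\in\{1,\dots,N-M\}$ and minimize $\sum_{i\in\mathcal{N}}\tilde{r}_i$ over all admissible $\mathcal{N}$, i.e., over all $\mathcal{N}\subset [N]\setminus\{j\}$ of size $P$ that contain $\{1,\dots,j-1\}$. Since $\tilde{r}_1\le\dots\le\tilde{r}_N$, the minimum is achieved by completing $\{1,\dots,j-1\}$ with the $P-j+1$ smallest indices from $\{j+1,\dots,N\}$, namely $\{j+1,\dots,P+1\}$. (This choice is well-defined because the bound $1\le j\le N-M=P+1$ forces $j-1\le P$, and $P+1\le N$.) Hence the worst-case constraint for index $j$ is
\[
\sum_{i=1}^{P+1}\tilde{r}_i-\tilde{r}_j\ge k_j.
\]

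Second, I would show the key identity $\sum_{i=1}^{P+1}\tilde{r}_i=\tilde{r}$. Since $Q\le P$ (an easy consequence of $N=(M+1)(Q+1)-R$ with $0\le R\le M$), the sum splits as
\[
\sum_{i=1}^{P+1}\tilde{r}_i=(P+1)\tilde{r}-\sum_{i=1}^{Q}k_i-(P+1-Q)\,k_{Q+1},
\]
and substituting $P\tilde{r}=\sum_{i=1}^{Q}k_i+(P-Q+1)k_{Q+1}$ from~\eqref{eq:Tilder} collapses the right side to $\tilde{r}$. This is the only computational step, and it is essentially forced by the way $\tilde{r}$ was defined.

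Finally, I would verify the inequality $\tilde{r}-\tilde{r}_j\ge k_j$ in the two cases allowed by~\eqref{eq:Tilderis}. If $j\le Q$, then $\tilde{r}_j=\tilde{r}-k_j$, so the inequality holds with equality. If $Q<j\le N-M$, then $\tilde{r}_j=\tilde{r}-k_{Q+1}$, and the required inequality reduces to $k_{Q+1}\ge k_j$, which is exactly the re-labeling assumption $k_1\ge k_2\ge\dots\ge k_{N-M}$ stated in Theorem~\ref{thm:OverConstrainedReducedLPSolution}. This handles every constraint of LP~\eqref{eq:ReducedExpNewLP} and completes the proof.

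The main obstacle, such as it is, lies not in any deep computation but in the bookkeeping: one has to be careful that the minimizing $\mathcal{N}=\{1,\dots,P+1\}\setminus\{j\}$ is always admissible (in particular, that $j\le P+1$ holds exactly when $j\le N-M$, which is the range of the constraint index), and that the split of $\sum_{i=1}^{P+1}\tilde{r}_i$ into the two regimes $i\le Q$ and $i>Q$ is legitimate, which is what $Q\le P$ guarantees. Once these endpoints are pinned down, the feasibility reduces to the two-line case analysis above.
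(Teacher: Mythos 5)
Your proposal is correct and follows essentially the same route as the paper: both use the monotonicity $\tilde{r}_1\le\dots\le\tilde{r}_N$ from Lemma~\ref{lem:Increasingr} to reduce each constraint family indexed by $j$ to the single worst-case subset $\mathcal{N}=\{1,\dots,P+1\}\setminus\{j\}$, and then verify that constraint directly. You additionally carry out the computation ($\sum_{i=1}^{P+1}\tilde{r}_i=\tilde{r}$, hence equality for $j\le Q$ and reduction to $k_{Q+1}\ge k_j$ for $j>Q$) that the paper declares ``straightforward'' and omits, so your write-up is a completed version of the paper's argument rather than a different one.
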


\begin{proof} 
To prove the lemma it suffices (and we verify the sufficiency shortly) to show that $\{\tilde{r}_i\}$ meets the inequalities: 
\begin{equation}\label{eq:FirstIneq}
\sum_{i=1}^{Q} r_i-r_j+(P-Q+1)r_{Q+1}\geq k_j, \hspace{5pt} \forall 1\leq j\leq Q,\end{equation} and 
\begin{equation}\label{eq:LastIneq}
\sum_{i=1}^{Q} r_i+(P-Q)r_{Q+1}\geq k_j, \hspace{5pt} \forall Q< j\leq N-M.	
\end{equation} 	For every $1\leq j\leq Q$, the left-hand side (LHS) of the corresponding inequality in~\eqref{eq:FirstIneq} is the smallest in comparison with that of the rest of the corresponding inequalities in~\eqref{eq:ReducedExpNewLP}. This comes from the fact that in comparison with the LHS of the inequalities in~\eqref{eq:ReducedExpNewLP}, the LHS of the inequalities in \eqref{eq:FirstIneq} has the minimum number ($P-Q+1$) of the (larger) terms $\tilde{r}_i$, $i>Q$, or equivalently, the maximum number ($Q-1$) of the (smaller) terms $\tilde{r}_i$, $i\leq Q$ (by Lemma~\ref{lem:Increasingr}, $\tilde{r}_1\leq \tilde{r}_2\leq \dots\leq \tilde{r}_Q\leq \tilde{r}_{Q+1}=\dots=\tilde{r}_N$). Thus if the inequalities in~\eqref{eq:FirstIneq} hold true for $\{\tilde{r}_i\}$, then the rest of the inequalities in~\eqref{eq:ReducedExpNewLP} obviously hold true. The LHS of the inequalities in~\eqref{eq:LastIneq} are identical for every $Q< j\leq N-M$, and every such inequality holds true so long as the inequality with the largest right-hand side (RHS) ($k_{Q+1}$) holds true. (By~\eqref{eq:kis}, $k_{Q+1}\geq \dots\geq k_{N-M}$.) Thus, we can replace all the inequalities in~\eqref{eq:LastIneq} with one inequality: 
\begin{equation}\label{eq:NewLastIneq}
\sum_{i=1}^{Q} r_i+(P-Q)r_{Q+1}\geq k_{Q+1}.	
\end{equation} The rest of the proof is straightforward (and hence not included due to the lack of space) by showing that $\{\tilde{r}_i\}$ satisfies all the inequalities in~\eqref{eq:FirstIneq} and~\eqref{eq:NewLastIneq} with equality.
\end{proof}

\begin{lemma}\label{lem:Optimality}
$\{\tilde{r}_i\}$ is optimal with respect to LP~\eqref{eq:ReducedExpNewLP}.
\end{lemma}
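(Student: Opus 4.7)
The plan is to establish optimality via linear programming duality: I would exhibit a dual-feasible solution whose objective value equals $\sum_i \tilde{r}_i$, so that together with Lemma~\ref{lem:Feasibility} weak duality forces $\{\tilde{r}_i\}$ to be optimal.

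First, I would dualize LP~\eqref{eq:ReducedExpNewLP}: each admissible pair $(j,\mathcal{N})$ indexing a primal constraint yields a nonnegative dual variable $y_{j,\mathcal{N}}$, and the dual maximizes $\sum_{j,\mathcal{N}} k_j\, y_{j,\mathcal{N}}$ subject to $\sum_{(j,\mathcal{N}):\, i\in\mathcal{N}} y_{j,\mathcal{N}} = 1$ for every $i\in[N]$. Guided by complementary slackness and the equalities already exhibited in the proof of Lemma~\ref{lem:Feasibility}, I would support the dual only on the tight primal families: for $1\leq j\leq Q$, the sets $\mathcal{N}=(\{1,\dots,Q\}\setminus\{j\})\cup B$ with $B\subset\{Q+1,\dots,N\}$, $|B|=P-Q+1$; and for $j=Q+1$, the sets $\mathcal{N}=\{1,\dots,Q\}\cup B$ with $B\subset\{Q+2,\dots,N\}$, $|B|=P-Q$.

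Second, within each $j\leq Q$ family I would split the weights according to whether $Q+1\in\mathcal{N}$, assigning uniform values $\alpha$ and $\beta$ to the two sub-families, and a single uniform weight $\gamma$ to the $(Q+1)$-st family. Exploiting the symmetry in $\{1,\dots,Q\}$ and in $\{Q+2,\dots,N\}$, the $N$ dual equalities collapse into three equations in $(\alpha,\beta,\gamma)$---one for $i\in\{1,\dots,Q\}$, one for $i=Q+1$, and one for $i\in\{Q+2,\dots,N\}$---whose coefficients are standard binomial counts. I would solve this $3\times 3$ system using the relations $N=(M+1)(Q+1)-R$ and $P=(M+1)Q-R$; manipulation of binomial ratios yields explicit expressions whose numerators are proportional to $1$, $R$, and $M+1-R$, respectively, so the hypothesis $R\leq M$ guarantees $\alpha,\beta,\gamma\geq 0$. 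Finally, I would compute the dual objective in closed form and match it to~\eqref{eq:OptimalValuePrimal}, a verification that reduces to the identity $P+R=(M+1)Q$ (immediate from $P=(M+1)Q-R$).

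The main obstacle is the structural asymmetry between client $Q+1$ and the later clients $Q+2,\dots,N$: the $(Q+1)$-st tight family, by construction, omits $Q+1$ from every admissible $\mathcal{N}$, whereas the families for $j\leq Q$ may or may not contain $Q+1$. A naive scheme with a single uniform weight per family satisfies the three dual equations only in the degenerate case $R=0$; for $R\geq 1$ the additional splitting of each $j\leq Q$ family along the inclusion of $Q+1$ in $\mathcal{N}$ is essential, and confirming nonnegativity of the resulting weights is precisely where the hypothesis $R\leq M$ enters.
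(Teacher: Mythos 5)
You take the same overall route as the paper---exhibit an explicit feasible point for the dual LP~\eqref{eq:DualReducedExpNewLP} whose objective value equals $\sum_i\tilde{r}_i$, then invoke weak duality together with Lemma~\ref{lem:Feasibility}---but your dual certificate is genuinely different. The paper's certificate is sparse: a family $\mathbb{S}$ of exactly $N$ subsets, each carrying weight $1/P$, chosen by a greedy balancing rule, and most of the paper's proof is the counting argument (the quantity $I(m,n)$) showing that every index lies in exactly $P$ members of $\mathbb{S}$. You instead spread the dual mass uniformly over the full symmetric families of tight constraints, splitting each of the first $Q$ families according to whether $Q+1\in\mathcal{N}$, and reduce dual feasibility to a $3\times 3$ system in $(\alpha,\beta,\gamma)$. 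This trades the paper's combinatorial selection argument for binomial-coefficient algebra, and it does work: spot checks confirm the system has a nonnegative solution (e.g.\ $M=1$, $Q=2$, $R=0$ gives $\alpha=\gamma=1/6$, $\beta=0$; $M=2$, $Q=2$, $R=1$ gives $\alpha=1/20$, $\beta=1/50$, $\gamma=1/25$). One simplification you missed: the objective match with~\eqref{eq:OptimalValuePrimal} is automatic once the $N$ dual constraints hold with equality, since summing them yields $P(Qu+v)=N$, where $u$ and $v$ denote the total weight on a family with $j\le Q$ and on the $(Q+1)$-st family, and combining this with the equality for an index $i\le Q$, namely $(Q-1)u+v=1$, already forces $u=(M+1)/P$ and $v=(M-R+1)/P$; no further identity needs checking. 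Two of your side claims are inaccurate, though neither is fatal to the method: the solved weights are not proportional to $1$, $R$, $M+1-R$ (in the $R=0$ example above $\gamma=\alpha$ rather than $\gamma=(M+1)\alpha$), and the single-uniform-weight scheme fails even when $R=0$ (for $M\ge 1$ the dual equation for $i>Q+1$ combined with the one for $i=Q+1$ forces $\gamma=0$, which is then inconsistent with the remaining equations). The one substantive step you still owe is a general proof that the solved $\alpha$, $\beta$, $\gamma$ are nonnegative; as it stands this is asserted from a claimed closed form that does not match the instances above.
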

 
\begin{proof} Consider the dual of LP~\eqref{eq:ReducedExpNewLP}: 
\begin{eqnarray}\label{eq:DualReducedExpNewLP}
\mathrm{maximize} && \sum_{j}\sum_{\mathcal{N}} k_j s_{\mathcal{N}}, \\ 
\nonumber \mbox{s.t.} && \sum_{j}\sum_{\cal{N}} s_{\mathcal{N}} \mathds{1}_{\{i\in\mathcal{N}\}} \leq 1, \hspace{5pt} \forall 1\leq i\leq N\\ 
\nonumber && \forall \{\mathcal{N}\subset [N]\setminus\{j\}: |\mathcal{N}|= P,\{i\}_{1\leq i<j}\in\mathcal{N}\}, \\ 
\nonumber && \forall 1\leq j\leq N-M,\\ 
\nonumber && (s_{\mathcal{N}}\geq 0).
\end{eqnarray} (We notice that $\mathcal{N}$ depends on $j$, yet we use the same notation $\mathcal{N}$, instead of $\mathcal{N}_j$, for the ease of exposition.)

We show that the duality gap with regards to LP~\eqref{eq:ReducedExpNewLP} and LP~\eqref{eq:DualReducedExpNewLP} is zero. To be more specific, we prove, by construction, there always exists a set ${\mathbb{S}}$ of $N$ subsets $\mathcal{N}$ such that $\{s_{\mathcal{N}}\}$ is feasible to LP~\eqref{eq:DualReducedExpNewLP} so long as $s_{{\mathcal{N}}}=\frac{1}{P}$, for every ${\mathcal{N}}\in{\mathbb{S}}$, and $s_{\mathcal{N}}=0$, for every $\mathcal{N}\notin\mathbb{S}$. By the structure of our construction process, $\{i\}$ ($1\leq i\leq N$) belongs to $P$ subsets ${\mathcal{N}}\in{\mathbb{S}}$. Thus, \[\sum_{j}\sum_{\mathcal{N}} s_{\mathcal{N}}\mathds{1}_{\{i\in\mathcal{N}\}}=1,\hspace{5pt} \forall 1\leq i\leq N.\] (Every inequality in~\eqref{eq:DualReducedExpNewLP} holds with equality.) Moreover, our choice of ${\mathbb{S}}$ has a partition $\{{\mathbb{S}}^{(1)},\dots,{\mathbb{S}}^{(Q+1)}\}$ such that (i) $|{\mathbb{S}}^{(j)}|=M+1$, $1\leq j\leq Q$, and $|{\mathbb{S}}^{(Q+1)}|=M-R+1$, and (ii) $\{j\}\notin {\mathcal{N}}$ and $\{i\}_{1\leq i<j}\in {\mathcal{N}}$, for every $\mathcal{N}\in {\mathbb{S}}^{(j)}$, $1\leq j\leq Q+1$. By (i) and (ii), it is obvious that
\begin{dmath}\label{eq:OptimalValueDual}
    \sum_{j}\sum_{\mathcal{N}} k_j s_{\mathcal{N}}=\sum_{j=1}^{Q}\frac{M+1}{P}k_j+\frac{M-R+1}{P}k_{Q+1}.
\end{dmath} By comparing~\eqref{eq:OptimalValuePrimal} and~\eqref{eq:OptimalValueDual}, it follows that the optimal values of the (primal) LP~\eqref{eq:ReducedExpNewLP} and (its dual) LP~\eqref{eq:DualReducedExpNewLP} are equal: 
\begin{equation}\label{eq:StrongDuality}
\sum_{i=1}^{N} \tilde{r}_i=\sum_{j}\sum_{\mathcal{N}} k_j s_{\mathcal{N}},
\end{equation} since $\frac{M+1}{P}=\frac{N-P}{P}$ and $\frac{M-R+1}{P}=\frac{Q(N-P)-N}{P}$. Thus, by the duality principle,~\eqref{eq:StrongDuality} proves the optimality of $\{\tilde{r}_{i}\}$. 

The rest of the proof proceeds by the construction of set $\mathbb{S}$ with properties (i) and (ii), defined earlier. Let $\mathbb{\tilde{S}}^{(j)}$, $1\leq j\leq Q$, be the set of all subsets $\{\mathcal{N}:\{i\}_{1\leq i<j}\in\mathcal{N},\{j\}\notin\mathcal{N},\{i\}_{j<i\leq Q}\in\mathcal{N}\}$, and $\mathbb{\tilde{S}}^{(Q+1)}$ be the set of all subsets $\{\mathcal{N}:\{i\}_{1\leq i\leq Q}\in\mathcal{N}\}$. For every $1\leq j\leq Q$ and every $0\leq m \leq M$, or $j=Q+1$ and every $0\leq m\leq M-R$, construct the (auxiliary) set $\mathbb{S}^{(j)}_{m+1}$ (in a recursive manner): \[\mathbb{S}^{(j)}_{m+1}=\mathbb{S}^{(j)}_{m}\cup \mathcal{N},\] for arbitrary $\mathcal{N}\in\mathbb{\tilde{S}}^{(j)}$, $\{Q< i_k\leq N\}_{1\leq k\leq P-Q+1}\in \mathcal{N}$, such that \[n_{i_1}(\mathbb{S}^{(j)}_{m})\leq \dots\leq n_{i_{P-Q+1}}(\mathbb{S}^{(j)}_{m})\leq\dots\leq n_{i_{N-Q}}(\mathbb{S}^{(j)}_{m}),\] where $n_i(\mathbb{S}^{(j)}_{m})$ is the number of subsets $\{\mathcal{N}:\mathcal{N}\in\mathbb{S}^{(j)}_{m}$, $\{i\}\in\mathcal{N}\}$; $\mathbb{S}^{(j)}_{0}=\mathbb{S}^{(j-1)}_{M+1}$, $1<j\leq Q+1$, and $\mathbb{S}^{(1)}_{0}=\emptyset$. (Such a subset $\mathcal{N}$ always exists since there exists a unique $\mathcal{N}\in\mathbb{\tilde{S}}^{(j)}$ such that $\{i\}_{i\in\mathcal{I}}\in\mathcal{N}$, for every $\mathcal{I}\subset [N]\setminus[Q]$ and $|\mathcal{I}|=P-Q+1$.) Now, we can construct partitions $\{\mathbb{S}^{(j)}\}$:
\begin{equation}\mathbb{S}^{(j)}=\left\{
\begin{array}{ll}
\mathbb{S}^{(j)}_{M+1}-\mathbb{S}^{(j)}_0, & 1\leq j\leq Q,\\ 
\mathbb{S}^{(j)}_{M-R+1}-\mathbb{S}^{(j)}_0, & j=Q+1.
\end{array} 
\right.\end{equation} By construction, both properties (i) and (ii) hold so long as $P$ and only $P$ subsets $\mathcal{N}\in\mathbb{S}$ exist such that $\{i\}\in\mathcal{N}$ (for every $i$). By the definition of $\mathbb{S}$, obviously $n_i(\mathbb{S})=P$, for every $1\leq i\leq Q$. Thus, it suffices to show $n_i(\mathbb{S})=P$, for every $Q<i\leq N$. Let $n^{(m)}_i$ be $n_i$ when $m$ subsets are chosen. By the structure of the construction, it is not hard to see at each step $m$ of the selection of one new subset for every $Q<i\leq N$, either $n_i$ increases by one or it does not change. (Thus, $0\leq n_i^{(m)}-n_i^{(m-1)}\leq 1$.) Since at the beginning of the process (when no subset is chosen) $n^{(0)}_i=0$, for every $i$, $n^{(m)}_i$, for every $1\leq m\leq N$, is either $n$ or $n-1$, for some $1\leq n\leq m$ (depending on $N$ and $M$). It suffices to show that $n_i^{(N)}=P$, for every $Q<i\leq N$. Let \[I(m,n)=|\{Q< i\leq N: n^{(m)}_i=n\}|.\] (Thus, $|\{Q< i\leq N: n^{(m)}_i=n-1\}\}|=N-Q-I(m,n)$.) It is easy to see that $n_i^{(N)}=P$ so long as $I(N,P)=N-Q$ (i.e., $N-Q-I(N,P)=0$). By analyzing the construction process step by step, it can be shown that \[I(m,n)=m(P-Q+1)-(n-1)(N-Q)+\xi(m),\] where 
\begin{dmath*}
	\xi(m)=\left\{
	\begin{array}{ll}
		0, & 1\leq m\leq (M+1)Q,\\ 
		(M+1)Q-m, & (M+1)Q<m\leq N.
	\end{array} 
	\right.
\end{dmath*} It is easy to verify $I(N,P)=N-Q$, and thus, for every $Q<i\leq N$, $n_i^{(N)}=P$. This completes the proof.
\end{proof}

The following result follows from the random packet distribution assumption (by the application of the law of large numbers), and is useful in the proof of Lemma~\ref{lem:NALP3and5}. 

\begin{lemma}\label{lem:SLLNResults}
For every $\mathcal{N}\subset[N]$, $|\mathcal{N}|=V$ ($0<V<N-M$), and $\mathcal{I}\subset \overline{\mathcal{N}}$, $|\mathcal{I}|=M$ ($0\leq M<N$), for any $\epsilon>0$, w.p.~approaching $1$ as $K\rightarrow\infty$, we have
\begin{dmath*}
	\Biggl|\frac{1}{K}\Biggl|\bigcap_{i\in\overline{\mathcal{N}}\setminus \mathcal{I}}\overline{X}_{i,\mathcal{I}}\Biggr|-Z_{M,V}\Biggr|<\epsilon,
\end{dmath*} where \[Z_{M,V}=\frac{(1-\alpha)^{-M-V}-(1-\alpha)^{-M}}{(1-\alpha)^{-N}-1}.\] %and 
%\[\alpha=\Pr\left\{x_n\in\overline{X}_i\right\},\hspace{5pt} \forall 1\leq i\leq N\] 
\end{lemma}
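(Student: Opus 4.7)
The plan is to write $|\bigcap_{i\in\overline{\mathcal{N}}\setminus\mathcal{I}}\overline{X}_{i,\mathcal{I}}|$ as a sum of $K$ i.i.d.~indicator random variables (one per packet), compute the per-packet probability in closed form, and then conclude by a standard concentration argument (the strong law of large numbers, or a Hoeffding bound if one wants the explicit $\epsilon$-statement with probability approaching $1$).

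First I would unfold the definition in~\eqref{eq:XiI}: a packet $x$ belongs to $\overline{X}_{i,\mathcal{I}}$ iff $x\notin X_i$ \emph{and} $x\in X_j$ for some $j\in[N]\setminus\mathcal{I}$. Hence $x$ lies in the desired intersection iff (a) $x\notin X_i$ for every $i\in\overline{\mathcal{N}}\setminus\mathcal{I}$, and (b) $x\in X_j$ for some $j\in[N]\setminus\mathcal{I}$. Using the disjoint decomposition $[N]\setminus\mathcal{I}=(\overline{\mathcal{N}}\setminus\mathcal{I})\cup\mathcal{N}$ together with (a), condition (b) collapses to $x\in X_j$ for some $j\in\mathcal{N}$. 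Under the random packet distribution, the joint event in (a) and (b) then has unconditional probability $(1-\alpha)^{|\overline{\mathcal{N}}\setminus\mathcal{I}|}\bigl(1-(1-\alpha)^{|\mathcal{N}|}\bigr)=(1-\alpha)^{N-M-V}-(1-\alpha)^{N-M}$. Dividing by the probability $1-(1-\alpha)^N$ of the conditioning event ``$x$ belongs to at least one client,'' and then multiplying numerator and denominator by $(1-\alpha)^{-N}$, yields exactly $Z_{M,V}$.

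Finally, because the membership pattern of each packet across the clients is drawn independently of that of every other packet, the $K$ indicators $\mathds{1}\{x\in\bigcap_{i\in\overline{\mathcal{N}}\setminus\mathcal{I}}\overline{X}_{i,\mathcal{I}}\}$ are i.i.d.~Bernoulli$(Z_{M,V})$. A Hoeffding bound then gives $\Pr\bigl(|K^{-1}|\bigcap_{i}\overline{X}_{i,\mathcal{I}}|-Z_{M,V}|\geq\epsilon\bigr)\leq 2\exp(-2K\epsilon^2)\to 0$ as $K\to\infty$, which is the claim. The only subtle point is the conditioning on $X=\bigcup_i X_i$: it induces a mild dependence \emph{within} a single packet (captured by the $1-(1-\alpha)^N$ in the denominator) but preserves full independence \emph{across} distinct packets, which is exactly what lets the LLN go through. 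There is no genuine combinatorial obstacle; the whole content of the lemma is the set-theoretic rewriting in the second paragraph and the short algebraic manipulation that follows it.
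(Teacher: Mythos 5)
Your proof is correct and follows essentially the same route as the paper's: reduce to per-packet indicator variables, compute the per-packet probability under the conditioning $X=\cup_i X_i$, and conclude by a law-of-large-numbers/concentration argument. The only cosmetic difference is that the paper first rewrites the quantity as the set difference $|\bigcap_{i\in\overline{\mathcal{N}}\setminus\mathcal{I}}\overline{X}_i|-|\bigcup_{i\in\mathcal{I}}X_i\setminus\bigcup_{i\in\overline{\mathcal{I}}}X_i|$ and applies the LLN to each term separately, whereas you characterize membership in the intersection directly and obtain $Z_{M,V}$ in one step; the two computations agree.
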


\vspace{0.5cm}
\begin{proof} By~\eqref{eq:XiI}, it can be easily shown that
\[
\overline{X}_{i,\mathcal{I}}= \overline{X}_i\setminus(\cup_{j\in\mathcal{I}} X_j \setminus\cup_{j\in \overline{\mathcal{I}}} X_j),
\] and further, 
\begin{dmath}\label{eq:SetEquality}
	\Biggl|\bigcap_{i\in\overline{\mathcal{N}}\setminus\mathcal{I}} \overline{X}_{i,\mathcal{I}}\Biggr|=\Biggl|\bigcap_{i\in\overline{\mathcal{N}}\setminus\mathcal{I}} \overline{X}_i\Biggr|- \Biggl|\bigcup_{i\in\mathcal{I}} {X}_i\setminus\bigcup_{i\in\overline{\mathcal{I}}} X_i\Biggr|,
\end{dmath} where $\overline{\mathcal{I}} = [N]\setminus \mathcal{I}$. For every $x_n\in X$, 
\begin{dmath*} 
\nonumber \Pr\left\{x_n \in \bigcap_{i\in\overline{\mathcal{N}}\setminus\mathcal{I}} \overline{X}_i \right\}= \frac{(1-\alpha)^{-M-V}-1}{(1-\alpha)^{-N}-1},
\end{dmath*} and 
\begin{dmath*} 
\nonumber \Pr\left\{x_n \in \left\{\bigcup_{i\in\mathcal{I}} {X}_i\setminus\bigcup_{i\in\overline{\mathcal{I}}} X_i \right\}\right\}= \frac{(1-\alpha)^{-M}-1}{(1-\alpha)^{-N}-1}.
\end{dmath*} The followings hold true for any $\epsilon>0$, w.p.~approaching $1$ as $K\rightarrow\infty$. By the law of large numbers, 
\begin{dmath}\label{eq:Ineq1}
\Biggl|\frac{1}{K}\Biggl|\bigcap_{i\in\overline{\mathcal{N}}\setminus\mathcal{I}} \overline{X}_i\Biggr|	- \frac{(1-\alpha)^{-M-V}-1}{(1-\alpha)^{-N}-1}\Biggr|<\frac{\epsilon}{2}
\end{dmath} and 
\begin{dmath}\label{eq:Ineq2}
\Biggl|\frac{1}{K}\Biggl|\bigcup_{i\in\mathcal{I}} {X}_i\setminus\bigcup_{i\in\overline{\mathcal{I}}} X_i\Biggr|	- \frac{(1-\alpha)^{-M}-1}{(1-\alpha)^{-N}-1}\Biggr|<\frac{\epsilon}{2}.
\end{dmath} Thus, combining~\eqref{eq:Ineq1} and~\eqref{eq:Ineq2} together with~\eqref{eq:SetEquality}, \begin{dmath*}
	\left|\frac{1}{K}\left|\bigcap_{i\in\overline{\mathcal{N}}\setminus\mathcal{I}} \overline{X}_{i,\mathcal{I}}\right|-\frac{(1-\alpha)^{-M-V}-(1-\alpha)^{-M}}{(1-\alpha)^{-N}-1}\right|<\epsilon,
\end{dmath*} w.p.~approaching $1$ as $K\rightarrow\infty$.
\end{proof}
 
\begin{lemma}\label{lem:NALP3and5}
$\{\tilde{r}_i\}$ is an approximate solution to LP~\eqref{eq:FinalExpNewLP} and LP~\eqref{eq:ExpNewLP}.
\end{lemma}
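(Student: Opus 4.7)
My plan is to establish the two required properties---feasibility and near-optimality of $\{\tilde{r}_i\}$---with respect to each of LP~\eqref{eq:FinalExpNewLP} and LP~\eqref{eq:ExpNewLP}, leveraging Lemma~\ref{lem:ReducedExpNewLPSolution} for the structure of the reduced LP, Lemma~\ref{lem:SLLNResults} for the concentration of all the relevant set cardinalities, and weak LP duality to lower-bound the optima.

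Feasibility comes in two pieces. First, LP~\eqref{eq:ReducedExpNewLP} over-constrains LP~\eqref{eq:FinalExpNewLP} because each $k_j$ defined in~\eqref{eq:kidef} dominates the original RHS by construction, so Lemma~\ref{lem:Feasibility} already yields feasibility of $\{\tilde{r}_i\}$ for LP~\eqref{eq:FinalExpNewLP}. Second, the $|\mathcal{N}|=P$ constraints of LP~\eqref{eq:ExpNewLP} coincide with those of LP~\eqref{eq:FinalExpNewLP}, so it remains to verify the constraints with $|\mathcal{N}|=V<P$. I would do this asymptotically: applying Lemma~\ref{lem:SLLNResults} with $V=P$ to every $k_j$ yields $k_j = K Z_{M,P}+o(K)$ with probability approaching $1$; substituting into~\eqref{eq:Tilderis}--\eqref{eq:Tilder} then gives $\tilde{r}_i = K Z_{M,P}/P+o(K)$ uniformly in $i$ (note that the $k_i$'s collapse to the common value $K Z_{M,P}$ up to $o(K)$, so the differential terms in~\eqref{eq:Tilderis} vanish); meanwhile Lemma~\ref{lem:SLLNResults} applied with the given $V$ shows the RHS equals $K Z_{M,V}+o(K)$. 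The constraint therefore reduces, up to $o(K)$, to $V Z_{M,P}/P \geq Z_{M,V}$. Writing $a=(1-\alpha)^{-1}>1$ and cancelling a common positive factor, this becomes exactly $(a^P-1)/P\geq (a^V-1)/V$, which holds because $V\mapsto (a^V-1)/V$ is non-decreasing for $a>1$ (a one-line derivative computation: the numerator $V a^V\ln a - (a^V-1)$ vanishes at $V=0$ and has nonnegative derivative $V a^V(\ln a)^2$).

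For near-optimality, I would use weak LP duality to sandwich the optimum between two quantities that differ by $o(K)$. Because LP~\eqref{eq:FinalExpNewLP} is a relaxation of LP~\eqref{eq:ExpNewLP} (fewer constraints), its optimum is no larger than that of LP~\eqref{eq:ExpNewLP}, which in turn is no larger than $\sum_i\tilde{r}_i$ by the feasibility just established. From Lemma~\ref{lem:ReducedExpNewLPSolution}, equation~\eqref{eq:OptimalValuePrimal}, and the concentration $k_j=K Z_{M,P}+o(K)$, I obtain $\sum_i\tilde{r}_i = (N/P) K Z_{M,P}+o(K)$. To match this from below, I would take the symmetric dual-feasible point $b_{\mathcal{N}}=1/\binom{N-1}{P-1}$ for every $\mathcal{N}$ with $|\mathcal{N}|=P$ (and $b_{\mathcal{N}}=0$ otherwise); by symmetry $\sum_{\mathcal{N}\ni i}b_{\mathcal{N}}=1$ for each $i$, and its dual objective equals $(\binom{N}{P}/\binom{N-1}{P-1})(K Z_{M,P}+o(K)) = (N/P) K Z_{M,P}+o(K)$. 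Sandwiching the optimum of LP~\eqref{eq:ExpNewLP} (and hence of LP~\eqref{eq:FinalExpNewLP}) between this dual lower bound and $\sum_i\tilde{r}_i$ yields the required $\epsilon$-closeness for any $\epsilon>0$ and both LPs at once.

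The main obstacle is bookkeeping rather than new mathematics: I must invoke Lemma~\ref{lem:SLLNResults} for every $k_j$, for the RHS of every $|\mathcal{N}|<P$ constraint of LP~\eqref{eq:ExpNewLP}, and for every RHS appearing in the dual objective, and ensure that all these concentration events hold simultaneously. Since $N$ is fixed while $K\to\infty$, the number of such events is bounded by a constant in $K$, so a single union bound suffices; I do not anticipate a deeper difficulty. The only genuinely new analytic input is the elementary monotonicity of $(a^V-1)/V$.
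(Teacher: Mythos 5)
Your proof is correct, and its skeleton matches the paper's: identify the size-$P$ constraints of LP~\eqref{eq:ExpNewLP} with those of LP~\eqref{eq:FinalExpNewLP} and discharge them exactly via the over-constrained LP~\eqref{eq:ReducedExpNewLP}, check the size-$V<P$ constraints asymptotically via Lemma~\ref{lem:SLLNResults}, and lower-bound the optimum by a dual-feasible point. Two of your sub-arguments, however, are genuinely different and cleaner than the paper's. First, for the key inequality $V/P>Z_{M,V}/Z_{M,P}$, the paper (Lemma~\ref{lem:FeasibilityFinal}) differentiates $\varphi(\alpha)$ in $\alpha$ and reduces to a second auxiliary function $\gamma(\alpha)$; you instead cancel the common factor in $Z_{M,V}$ and observe that $(a^V-1)/V$ is increasing in $V$ for $a=(1-\alpha)^{-1}>1$, a one-line secant-slope fact. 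One caveat: you state the conclusion as the non-strict ``$V Z_{M,P}/P\geq Z_{M,V}$,'' but since the $o(K)$ fluctuations can have either sign you need \emph{strict} inequality for $V<P$; your own derivative computation (numerator vanishing at $V=0$ with positive derivative) does deliver strictness, so simply say so. You are also more careful than the paper at $V=P$, where a purely asymptotic comparison degenerates (both sides tend to $Z_{M,P}$) and the constraint must be met exactly, which your appeal to Lemma~\ref{lem:Feasibility} and the domination $k_j\geq\max_{i\in\overline{\mathcal{N}}}|\overline{X}_{i,\overline{\mathcal{N}}\setminus\{i\}}|$ accomplishes. Second, for near-optimality the paper (Lemma~\ref{lem:OptimalityFinal}) recycles the combinatorially constructed family $\mathbb{S}$ from Lemma~\ref{lem:Optimality} as a feasible point of the dual~\eqref{eq:DualFinalExpNewLP}; your uniform assignment $s_{\mathcal{N}}=1/\binom{N-1}{P-1}$ over all $\binom{N}{P}$ subsets of size $P$ is dual-feasible by symmetry and yields the same asymptotic value $\frac{N}{P}KZ_{M,P}+o(K)$, so it sidesteps that construction entirely (it is still needed for the exact optimality of the reduced LP, but not here). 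Your sandwich then gives near-optimality for both LPs simultaneously, where the paper routes through its relaxation observations (i)--(ii); the closing union-bound remark over the finitely many concentration events is exactly the bookkeeping required.
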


\begin{proof} Since the objective functions in LP~\eqref{eq:FinalExpNewLP} and LP~\eqref{eq:ExpNewLP} are identical, and the constraints in LP~\eqref{eq:FinalExpNewLP} are a subset of the constraints in LP~\eqref{eq:ExpNewLP}, the following observations are straightforward:
(i) if $\{r_i\}$ is feasible to LP~\eqref{eq:ExpNewLP}, then it is feasible to LP~\eqref{eq:FinalExpNewLP}, and (ii) if $\{r_i\}$ is near-optimal to LP~\eqref{eq:FinalExpNewLP}, then it is near-optimal to LP~\eqref{eq:ExpNewLP}. Thus, it suffices to show $\{\tilde{r}_i\}$ is feasible to LP~\eqref{eq:ExpNewLP} (Lemma~\ref{lem:FeasibilityFinal}) and near-optimal to LP~\eqref{eq:FinalExpNewLP} (Lemma~\ref{lem:OptimalityFinal}). 
\end{proof}

\begin{lemma}\label{lem:FeasibilityFinal}
$\{\tilde{r}_i\}$ is feasible with respect to LP~\eqref{eq:ExpNewLP}.
\end{lemma}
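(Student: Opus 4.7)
The plan is to verify, for every $\mathcal{N}\subset[N]$ with $1\leq|\mathcal{N}|\leq P$, that $\sum_{i\in\mathcal{N}}\tilde{r}_i$ meets the corresponding right-hand side of LP~\eqref{eq:ExpNewLP} with probability approaching $1$ as $K\to\infty$. I split the argument by $V=|\mathcal{N}|$.

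The case $V=P$ reduces to Lemma~\ref{lem:Feasibility}. Since $|\overline{\mathcal{N}}|=M+1$, every $\mathcal{I}\subset\overline{\mathcal{N}}$ with $|\mathcal{I}|=M$ forces $\overline{\mathcal{N}}\setminus\mathcal{I}$ to be a singleton, so the right-hand side in~\eqref{eq:ExpNewLP} collapses to $\max_{i\in\overline{\mathcal{N}}}|\overline{X}_{i,\overline{\mathcal{N}}\setminus\{i\}}|$, exactly the right-hand side in LP~\eqref{eq:FinalExpNewLP}. Setting $j^{\ast}=\min(\overline{\mathcal{N}})\leq N-M$ forces $\{1,\dots,j^{\ast}-1\}\subset\mathcal{N}$, so the definition~\eqref{eq:kidef} yields $k_{j^{\ast}}\geq\max_{i\in\overline{\mathcal{N}}}|\overline{X}_{i,\overline{\mathcal{N}}\setminus\{i\}}|$; Lemma~\ref{lem:Feasibility} then supplies $\sum_{i\in\mathcal{N}}\tilde{r}_i\geq k_{j^{\ast}}$, closing this case.

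The case $V<P$ is the main work and will use Lemma~\ref{lem:SLLNResults}. A union bound over the finitely many pairs $(\mathcal{N},\mathcal{I})$ and the finitely many configurations appearing in~\eqref{eq:kidef} shows that, for any $\epsilon>0$, w.p.~approaching $1$, every $k_j/K$ lies within $\epsilon$ of $Z_{M,P}$ and every right-hand side in~\eqref{eq:ExpNewLP} at size $V$ lies within $\epsilon$ of $K\cdot Z_{M,V}$. Substituting these limits into~\eqref{eq:Tilderis}--\eqref{eq:Tilder} gives $\tilde{r}_i/K\to Z_{M,P}/P$ for every $i$ (since $\tilde{r}/K\to Z_{M,P}(P+1)/P$), so $\sum_{i\in\mathcal{N}}\tilde{r}_i/K\to VZ_{M,P}/P$. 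Feasibility therefore reduces to the strict deterministic inequality $VZ_{M,P}>PZ_{M,V}$, whose strictness is what will absorb the $o(1)$ slack from concentration.

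Setting $\beta=(1-\alpha)^{-1}>1$ and cancelling the common factor $\beta^{M}/(\beta^{N}-1)$ from both $Z_{M,P}$ and $Z_{M,V}$, the target inequality becomes $(\beta^{P}-1)/P>(\beta^{V}-1)/V$. This is immediate from the strict monotonicity of $V\mapsto(\beta^{V}-1)/V$ on $V>0$, a routine derivative/convexity check. The main obstacle I anticipate is precisely the $V<P$ case: unlike $V=P$, it cannot be reduced to LP~\eqref{eq:ReducedExpNewLP}, so one must combine the concentration union bound with the above monotonicity to obtain a uniformly strict gap across the finitely many constraints of~\eqref{eq:ExpNewLP} at sizes $1\leq V<P$.
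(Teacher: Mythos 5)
Your proof is correct and follows the same overall strategy as the paper's (concentration via Lemma~\ref{lem:SLLNResults} reducing everything to the deterministic ratio inequality $VZ_{M,P}>PZ_{M,V}$), but it differs in two places that are worth noting. First, you isolate the boundary case $V=P$ and dispose of it deterministically by observing that the $|\mathcal{N}|=P$ constraints of LP~\eqref{eq:ExpNewLP} collapse to those of LP~\eqref{eq:FinalExpNewLP} (since $\overline{\mathcal{N}}\setminus\mathcal{I}$ is a singleton) and are dominated by the $k_{j^\ast}$ constraint of LP~\eqref{eq:ReducedExpNewLP} with $j^\ast=\min(\overline{\mathcal{N}})$, so Lemma~\ref{lem:Feasibility} applies. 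The paper instead runs the probabilistic bound~\eqref{eq:FundamentalBound} uniformly over $1\leq V\leq P$, which is problematic at $V=P$: there $\sum_{i=1}^{P}\tilde r_i=k_{Q+1}$ concentrates exactly at $KZ_{M,P}$, so the strict inequality $\frac{1}{K}\sum_{i=1}^{P}\tilde r_i>Z_{M,P}+\epsilon$ cannot hold, and likewise~\eqref{eq:VPIneq} degenerates to the false claim $1>1$; your case split is precisely the needed repair, and it matters only because the slack in~\eqref{eq:VPIneq} vanishes at that endpoint while remaining strictly positive for $V<P$. Second, for the key inequality you cancel $\beta^{M}/(\beta^{N}-1)$ and invoke strict monotonicity of the secant slope $V\mapsto(\beta^{V}-1)/V$ of the convex function $\beta^{V}$, which is a one-line convexity argument replacing the paper's two nested derivative computations ($\varphi'(\alpha)<0$ followed by $\gamma'(\alpha)>0$); the two are logically equivalent but yours is more elementary. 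You also avoid Lemma~\ref{lem:Increasingr} in the $V<P$ case by noting that all $\tilde r_i/K$ share the limit $Z_{M,P}/P$, so $\sum_{i\in\mathcal{N}}\tilde r_i/K\to VZ_{M,P}/P$ for \emph{every} $\mathcal{N}$ of size $V$, not just $\mathcal{N}=[V]$. The only cosmetic caveat is that your statement ``every right-hand side at size $V$ lies within $\epsilon$ of $K\cdot Z_{M,V}$'' should read ``within $K\epsilon$'' (the normalized quantities are what concentrate), but this does not affect the argument.
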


\begin{proof} The feasibility follows immediately so long as $\{\tilde{r}_i\}$ meets the inequalities:
\begin{equation*}
\frac{1}{K}\sum_{i\in\cal{N}} r_i\geq \hspace{-5pt}\max_{{\scriptsize \begin{array}{c} \mathcal{I}\subset \mathcal{\overline{N}},\\ |\mathcal{I}|=M\end{array}}}\hspace{-2.5pt}\frac{1}{K}\Biggl|\bigcap_{i\in {\cal \overline{N}}\setminus\mathcal{I}}{\overline{X}}_{i,\mathcal{I}}\Biggr|, \hspace{5pt}\forall \{\mathcal{N}:1\leq |\mathcal{N}|\leq P\}.
\end{equation*} By Lemma~\ref{lem:Increasingr},
\[
\sum_{i\in \mathcal{N}} \tilde{r}_i\geq \sum_{i=1}^{V} \tilde{r}_i, \hspace{5pt} \forall\{\mathcal{N}:|\mathcal{N}|=V\}.
\] Thus, it suffices to show 
\begin{equation*}
\sum_{i=1}^{V} \tilde{r}_i\geq \max_{{\scriptsize \begin{array}{c} \mathcal{I}\subset [N]\setminus [V],\\ |\mathcal{I}|=M\end{array}}}\Biggl|\bigcap_{i\in [N]\setminus\mathcal{I}\cup [V]}{\overline{X}}_{i,\mathcal{I}}\Biggr|, \hspace{5pt} \forall 1\leq V\leq P.
\end{equation*}

The following is true (by the result of Lemma~\ref{lem:SLLNResults}) w.p.~approaching $1$ as $K\rightarrow\infty$. For any $\epsilon>0$, 
\begin{equation*}
\Biggl|\max_{{\scriptsize \begin{array}{c} \mathcal{I}\subset \mathcal{\overline{N}},\\ |\mathcal{I}|=M\end{array}}}\frac{1}{K}\Biggl|\bigcap_{i\in {\cal \overline{N}}\setminus\mathcal{I}}{\overline{X}}_{i,\mathcal{I}}\Biggr| -Z_{M,V}\Biggr|<\epsilon,\hspace{5pt}\forall \{\mathcal{N}: |\mathcal{N}|=V\}.
\end{equation*} Thus, we need to show
\begin{equation}\label{eq:FundamentalBound}
\frac{1}{K}\sum_{i=1}^{V} \tilde{r}_i> Z_{M,V}+\epsilon, \hspace{5pt} \forall 1\leq V\leq P
\end{equation} w.p.~approaching $1$ as $K\rightarrow\infty$. We consider two cases: (i) $V>Q$ and (ii) $V\leq Q$. In Case (i), 
\begin{dmath}\label{eq:SumTilderCase1}
\sum_{i=1}^{V} \tilde{r}_i=\sum_{j=1}^{Q}k_j \left(\frac{V-P}{P}\right) +k_{Q+1}\left(\frac{V+Q(P-V)}{P}\right),
\end{dmath} and in Case (ii),
\begin{dmath}\label{eq:SumTilderCase2}
\sum_{i=1}^{V} \tilde{r}_i=\sum_{j=1}^{V}k_j\left(\frac{V-P}{P}\right)+\sum_{j=V+1}^{Q}k_j\left(\frac{V}{P}\right)+k_{Q+1}\left(\frac{V+V(P-Q)}{P}\right).
\end{dmath}

Let $\epsilon_j$, $1\leq j\leq Q$, be equal to $\epsilon$ in Case (i) and Case (ii), and $\epsilon_{Q+1}$ be equal to $\frac{Q(P-V)}{Q(P-V)+V}\epsilon$ or $\frac{P-Q}{P-Q+1}\epsilon$ in Case (i) or Case (ii), respectively. By the result of Lemma~\ref{lem:SLLNResults}, 
\begin{dmath}\label{eq:kappaSLLN}
\frac{k_j}{K}> Z_{M,P} -\epsilon_j,
\end{dmath} w.p.~approaching $1$, when $K\rightarrow\infty$. By combining~\eqref{eq:SumTilderCase1} or~\eqref{eq:SumTilderCase2} together with~\eqref{eq:kappaSLLN}, we get 
\begin{dmath}\label{eq:CombiningSLLNandSumTilde}
\frac{1}{K}\sum_{i=1}^{V}	\tilde{r}_i> \frac{V}{P}Z_{M,P}+\epsilon,
\end{dmath} in Case (i) or (ii), respectively. By comparing~\eqref{eq:FundamentalBound} and~\eqref{eq:CombiningSLLNandSumTilde}, one can see we need to show 
\begin{equation}\label{eq:VPIneq}
\frac{V}{P}>\frac{Z_{M,V}}{Z_{M,P}}, \hspace{5pt} \forall 1\leq V\leq P
\end{equation} w.p.~approaching $1$ as $K\rightarrow\infty$ (for every $0<\alpha<1$). By substituting $Z_{M,V}$ and $Z_{M,P}$ into~\eqref{eq:VPIneq}, we get
\begin{equation}\label{eq:VPfInequality}
\frac{V}{P}>\varphi(\alpha),
\end{equation} where 
\[
\varphi(\alpha)=\frac{(1-\alpha)^{P-V}-(1-\alpha)^{P}}{1-(1-\alpha)^{P}}.
\] It is easy to see $\varphi(1)=0$ and $\varphi(\alpha)\rightarrow\frac{V}{P}$ as $\alpha\rightarrow 0$. By definition, $\frac{V}{P}\leq 1$. Thus,~\eqref{eq:VPfInequality} holds so long as $\varphi'(\alpha)< 0$, for every $0<\alpha<1$, where $\varphi'(\alpha)$ is the derivative of the function $\varphi(\alpha)$ with respect to $\alpha$ (i.e., $\varphi(\alpha)$ is decreasing, bounded from above by $\frac{V}{P}$). It is easy to see $\varphi'(\alpha)< 0$ so long as 
\begin{dmath}\label{eq:VPInverse}
\frac{1}{V}-\frac{1}{P}> \frac{(1-\alpha)^{V}}{V}-\frac{(1-\alpha)^{P}}{P}.
\end{dmath} Since 
\[
\frac{1}{n}>\frac{(1-\alpha)^n}{n},
\] for every $n>0$ and every $0<\alpha<1$,~\eqref{eq:VPInverse} holds so long as
\[
\frac{1}{n}-\frac{1}{n+1}>\frac{(1-\alpha)^{n}}{n}-\frac{(1-\alpha)^{n+1}}{n+1},
\] or equivalently, 
\begin{dmath}\label{eq:2ndIneq}
\frac{1-(1-\alpha)^n}{1-(1-\alpha)^{n+1}}> \frac{n}{n+1},
\end{dmath} for every $n>0$ and every $0<\alpha<1$. Let 
\[
\gamma(\alpha)=\frac{1-(1-\alpha)^n}{1-(1-\alpha)^{n+1}}.
\] Since $\gamma(1)=1$ and $\gamma(\alpha)\rightarrow\frac{n}{n+1}$, as $\alpha\rightarrow 0$,~\eqref{eq:2ndIneq} holds so long as $\gamma'(\alpha)>0$, for every $\alpha$ (i.e., $\gamma(\alpha)$ is increasing, bounded from below by $\frac{n}{n+1}$). It is easy to see $\gamma'(\alpha)>0$ so long as $\sum_{m=1}^{n}(1-\alpha)^{m}<n$, which obviously holds true for every $0<\alpha<1$. 
\end{proof}

\begin{lemma}\label{lem:OptimalityFinal}
$\{\tilde{r}_i\}$ is near-optimal with respect to LP~\eqref{eq:FinalExpNewLP}.	
\end{lemma}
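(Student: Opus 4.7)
The plan is to combine the primal upper bound coming from the already-established optimality for the over-constrained LP with a matching dual-flavored lower bound obtained by averaging all constraints of LP~\eqref{eq:FinalExpNewLP}. Since $\{\tilde{r}_i\}$ satisfies LP~\eqref{eq:ReducedExpNewLP} (Lemma~\ref{lem:ReducedExpNewLPSolution}) and that LP has entry-wise larger right-hand sides than LP~\eqref{eq:FinalExpNewLP}, $\{\tilde{r}_i\}$ is automatically feasible for LP~\eqref{eq:FinalExpNewLP}; hence $\sum_i \tilde{r}_i$ is an \emph{upper bound} on the optimal value $r^\star$ of LP~\eqref{eq:FinalExpNewLP}. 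The job is therefore only to show that this upper bound is within $\epsilon K$ of $r^\star$ for any $\epsilon>0$, w.p.\ approaching $1$ as $K\to\infty$.

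First I would evaluate the normalized upper bound. Applying Lemma~\ref{lem:SLLNResults} with $V=P$ and $|\mathcal{I}|=M$ to each of the (finitely many, since $N$ is fixed) terms inside the maximum defining $k_j$ in~\eqref{eq:kidef}, and using a union bound, gives $\tfrac{1}{K}k_j\to Z_{M,P}$ w.p.\ approaching $1$ for every $1\leq j\leq N-M$. Substituting into~\eqref{eq:OptimalValuePrimal} and simplifying,
\begin{equation*}
\frac{1}{K}\sum_{i=1}^N \tilde{r}_i\;\longrightarrow\; Z_{M,P}\cdot\frac{1}{P}\bigl(Q(N-P)+N+Q(P-N)\bigr)=\frac{N}{P}\,Z_{M,P},
\end{equation*}
w.p.\ approaching $1$.

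Next I would obtain a matching lower bound on $r^\star/K$. Summing the constraint of LP~\eqref{eq:FinalExpNewLP} over all $\binom{N}{P}$ subsets $\mathcal{N}$ of size $P$, each variable $r_i$ appears in exactly $\binom{N-1}{P-1}$ of them, so
\begin{equation*}
\binom{N-1}{P-1}\sum_{i=1}^N r_i\;\geq\;\sum_{\mathcal{N}:\,|\mathcal{N}|=P}\,\max_{i\in\overline{\mathcal{N}}}\Bigl|\overline{X}_{i,\overline{\mathcal{N}}\setminus\{i\}}\Bigr|.
\end{equation*}
For each such $\mathcal{N}$ and any fixed $i\in\overline{\mathcal{N}}$, Lemma~\ref{lem:SLLNResults} (with $V=P$, $\mathcal{I}=\overline{\mathcal{N}}\setminus\{i\}$) gives $\tfrac{1}{K}|\overline{X}_{i,\overline{\mathcal{N}}\setminus\{i\}}|\geq Z_{M,P}-\epsilon'$ w.p.\ approaching $1$; the inner maximum is at least this value, and by a union bound over the finitely many $\mathcal{N}$'s it holds uniformly. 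Dividing by $\binom{N-1}{P-1}$ and using $\binom{N}{P}/\binom{N-1}{P-1}=N/P$ yields, for every feasible $\{r_i\}$,
\begin{equation*}
\frac{1}{K}\sum_{i=1}^N r_i\;\geq\;\frac{N}{P}\,Z_{M,P}-\frac{N}{P}\epsilon',
\end{equation*}
hence the same lower bound for $r^\star/K$.

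Combining the two bounds gives $\bigl|\tfrac{1}{K}\sum_i \tilde{r}_i-\tfrac{1}{K}r^\star\bigr|<\epsilon$ for any prescribed $\epsilon>0$, w.p.\ approaching $1$, which is exactly near-optimality with respect to LP~\eqref{eq:FinalExpNewLP}. The one step that requires a bit of care is arguing that the convergences $k_j/K\to Z_{M,P}$ and the uniform lower bound on the maxima hold \emph{simultaneously} across the (constantly many) indices and subsets involved; this is the main --- but essentially routine --- obstacle and is handled by a union bound over a bounded collection of events, each of which has probability approaching $1$ by Lemma~\ref{lem:SLLNResults}.
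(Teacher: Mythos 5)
Your argument is correct, and it reaches the same asymptotic identity as the paper --- both the primal value and the optimum of LP~\eqref{eq:FinalExpNewLP} concentrate at $\tfrac{N}{P}Z_{M,P}$ per packet --- but via a different dual certificate. The paper reuses the sparse set $\mathbb{S}$ of $N$ subsets constructed in the proof of Lemma~\ref{lem:Optimality}, assigns $s_{\mathcal{N}}=\tfrac{1}{P}$ on $\mathbb{S}$ and zero elsewhere, and invokes weak duality on LP~\eqref{eq:DualFinalExpNewLP}; you instead average \emph{all} $\binom{N}{P}$ constraints of LP~\eqref{eq:FinalExpNewLP}, which is equivalent to choosing the uniform dual solution $s_{\mathcal{N}}=1/\binom{N-1}{P-1}$ (feasible, since each index lies in exactly $\binom{N-1}{P-1}$ subsets). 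Your route is simpler in that it bypasses the combinatorial construction of $\mathbb{S}$ entirely; it works here because Lemma~\ref{lem:SLLNResults} forces every right-hand side $\max_{i\in\overline{\mathcal{N}}}|\overline{X}_{i,\overline{\mathcal{N}}\setminus\{i\}}|$ to concentrate at the \emph{same} value $Z_{M,P}K$, so any dual certificate of total mass $N/P$ yields the matching lower bound up to $\epsilon$. What the paper's choice buys is economy elsewhere: $\mathbb{S}$ is already needed to prove \emph{exact} (zero-gap) optimality for the reduced LP~\eqref{eq:ReducedExpNewLP}, where the $k_j$ are distinct and a uniform averaging would not certify exactness, so reusing it here costs nothing. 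Your handling of the simultaneity of the concentration events by a union bound over the constantly many subsets is the right (and sufficient) justification.
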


\begin{proof} The dual of LP~\eqref{eq:FinalExpNewLP} can be written as: 
\begin{eqnarray}\label{eq:DualFinalExpNewLP}
\mathrm{maximize} && \sum_{\mathcal{N}} \max_{i\in\mathcal{\overline{N}}}\left|\overline{X}_{i,\mathcal{\overline{N}}\setminus \{i\}}\right| s_{\mathcal{N}}, \\ 
\nonumber \mbox{s.t.} && \sum_{\cal{N}} s_{\mathcal{N}} \mathds{1}_{\{i\in\mathcal{N}\}} \leq 1, \hspace{5pt} \forall 1\leq i\leq N\\ 
\nonumber && \forall \{\mathcal{N}\subset [N]:|\mathcal{N}|= P\}, \\ 
\nonumber && (s_{\mathcal{N}}\geq 0).
\end{eqnarray} Let $r^{*}_{K}$ be the optimal value of LP~\eqref{eq:FinalExpNewLP}. By the definition of the near-optimality, we require to show
\begin{equation}\label{eq:LastIneq00}
\frac{1}{K}\sum_{i=1}^{N}\tilde{r}_i< \frac{1}{K} r^{*}_{K}+\epsilon,
\end{equation} for any $\epsilon>0$. To do so, we use the set $\mathbb{S}=\{\mathbb{S}^{(1)},\dots,\mathbb{S}^{(Q+1)}\}$ which we previously constructed in the proof of Lemma~\ref{lem:Optimality}, and set $s_{\mathcal{N}}=\frac{1}{P}$, for every $\mathcal{N}\in\mathbb{S}$, and $s_{\mathcal{N}}=0$, for every $\mathcal{N}\notin\mathbb{S}$. Since LP~\eqref{eq:DualFinalExpNewLP} and LP~\eqref{eq:DualReducedExpNewLP} have identical constraints, $\{s_{\mathcal{N}}\}$, which was shown to be feasible with respect to LP~\eqref{eq:DualReducedExpNewLP}, is feasible with respect to LP~\eqref{eq:DualFinalExpNewLP}. Thus, by the duality principle, 
\begin{dmath}\label{eq:LastIneq0}
r^{*}_{K} \geq \sum_{\mathcal{N}} \max_{i\in\mathcal{\overline{N}}}\left|\overline{X}_{i,\overline{\mathcal{N}}\setminus\{i\}}\right| s_{\mathcal{N}} = \frac{1}{P}\sum_{j=1}^{Q+1}\sum_{\mathcal{N}\in\mathbb{S}^{(j)}} \max_{i\in \mathcal{\overline{N}}}\left|\overline{X}_{i,\overline{\mathcal{N}}\setminus\{i\}}\right|.
\end{dmath} The following results hold w.p.~approaching $1$ as $K\rightarrow\infty$ (by the results of Lemma~\ref{lem:SLLNResults}). For any $\{\epsilon_j>0\}$ and every $\mathcal{N}\in\mathbb{S}^{(j)}$,  
\begin{equation}\label{eq:LastIneq2}
\frac{1}{K} \max_{i\in\mathcal{\overline{N}}}\left|\overline{X}_{i,\overline{\mathcal{N}}\setminus\{i\}}\right|>Z_{M,P}-\frac{\epsilon_j}{2}, \hspace{5pt}  \forall 1\leq j\leq Q+1.
\end{equation} Since $|\mathbb{S}^{(j)}|=M+1$, $1\leq j\leq Q$, and $|\mathbb{S}^{(Q+1)}|=M-R+1$, combining~\eqref{eq:LastIneq0} and~\eqref{eq:LastIneq2} we can write
\begin{dmath}\label{eq:LastIneq3}
\frac{1}{K} \sum_{\mathcal{N}} \max_{i\in\mathcal{\overline{N}}}\left|\overline{X}_{i,\overline{\mathcal{N}}\setminus\{i\}}\right| s_{\mathcal{N}}> \frac{N}{P}Z_{M,P}-\sum_{j=1}^{Q} \left(\frac{M+1}{P}\right)\frac{\epsilon_j}{2} - \left(\frac{M-R+1}{P}\right)\frac{\epsilon_{Q+1}}{2}.
\end{dmath} Similarly, 
\begin{equation}\label{eq:LastIneq4}
\frac{k_j}{K}<Z_{M,P}+\frac{\epsilon_j}{2},\hspace{5pt} \forall 1\leq j\leq Q+1.
\end{equation} By combining~\eqref{eq:LastIneq4} together with~\eqref{eq:OptimalValuePrimal}, we get
\begin{dmath}\label{eq:LastIneq5}
\frac{1}{K}\sum_{i=1}^{N}\tilde{r}_i<\frac{N}{P}Z_{M,P}+\sum_{j=1}^{Q}\left(\frac{N-P}{P}\right)\frac{\epsilon_j}{2}+\left(\frac{N+Q(P-N)}{P}\right)\frac{\epsilon_{Q+1}}{2}
\end{dmath} By combining~\eqref{eq:LastIneq5} and~\eqref{eq:LastIneq3} together with~\eqref{eq:LastIneq0}, one can see~\eqref{eq:LastIneq00} holds so long as 
\begin{dmath}\label{eq:LastIneq6}
\epsilon> \sum_{j=1}^{Q} \left(\frac{M+1}{P}\right)\epsilon_j + \left(\frac{M-R+1}{P}\right)\epsilon_{Q+1}.
\end{dmath} The RHS of~\eqref{eq:LastIneq6} can be made arbitrarily close to $0$, and this completes the proof.
\end{proof}

\subsection{Proof of Theorem~\ref{thm:SCOverConstrainedReducedLPSolution}}\label{subsec:ExactThm}
LP~\eqref{eq:FinalExpNewLP} can be rewritten as (when $M=1$):
\begin{eqnarray}\label{eq:SCFinalExpNewLP}
\mathrm{minimize} && \sum_{i=1}^{N} r_i, \\ \nonumber
\mbox{s.t.} && \hspace{-0.25cm}\sum_{i\in\mathcal{N}_{m,n}} r_i\geq k_{m,n}, \hspace{5pt} \forall 1\leq m<n\leq N,
\end{eqnarray} where $k_{m,n}$ is given by~\eqref{eq:kijdef}, and $\mathcal{N}_{m,n}=[N]\setminus \{m,n\}$. 

The following two lemmas are useful in the proof of the theorem. (The proofs are straightforward and hence omitted). 

\begin{lemma}\label{lem:SCkPairs}
For every $1\leq m_1,m_2,n_1,n_2\leq N$, 
\begin{dmath}\label{eq:ki1i2j1j2} k_{m_1,n_1}-k_{m_1,n_2}=k_{m_2,n_1}-k_{m_2,n_2},
\end{dmath} so long as $\{m_1,m_2\}\cap \{n_1,n_2\}=\emptyset$.
\end{lemma}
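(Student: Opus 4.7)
The plan is to peel off the $i$- and $j$-dependence of $k_{i,j}$ into separate additive terms, after which the claimed identity reduces to a trivial cancellation. First I note that the disjointness hypothesis $\{m_1,m_2\}\cap\{n_1,n_2\}=\emptyset$ ensures $i\ne j$ in each of the four terms $k_{m_1,n_1}$, $k_{m_1,n_2}$, $k_{m_2,n_1}$, $k_{m_2,n_2}$, so the definition $k_{i,j}=|\overline{X}_{i,\{j\}}|=|(\cup_{l\ne j}X_l)\setminus X_i|$ coming from~\eqref{eq:kijdef} and~\eqref{eq:XiI} applies in every case.

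Next I would exploit that $i\ne j$ forces $X_i\subseteq \cup_{l\ne j} X_l$, since $i$ itself is among the indices $l\ne j$ appearing in the union. This containment turns the set difference into an honest difference of cardinalities:
\[
k_{i,j} \;=\; \Bigl|\bigcup_{l\ne j} X_l\Bigr| \;-\; |X_i|,
\]
so the dependence on $i$ (through $|X_i|$) and the dependence on $j$ (through the size of the union) are additively separated.

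Finally, subtracting gives
\[
k_{m_1,n_1}-k_{m_1,n_2} \;=\; \Bigl|\bigcup_{l\ne n_1} X_l\Bigr| - \Bigl|\bigcup_{l\ne n_2} X_l\Bigr|,
\]
since the two $|X_{m_1}|$ terms cancel; the same calculation with $m_1$ replaced by $m_2$ shows that $k_{m_2,n_1}-k_{m_2,n_2}$ collapses to the identical expression, and the lemma follows. There is essentially no obstacle beyond noticing the containment $X_i\subseteq \cup_{l\ne j} X_l$; once that observation is in place the lemma is a one-line cancellation, and it holds deterministically for any instance $\{X_i\}$ (no appeal to the random packet distribution is required).
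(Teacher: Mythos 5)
Your proof is correct: the observation that $i\neq j$ gives $X_i\subseteq\cup_{l\neq j}X_l$, hence $k_{i,j}=|\cup_{l\neq j}X_l|-|X_i|$, is exactly the additive separation that makes the identity a one-line cancellation, and the disjointness hypothesis is used precisely where needed (to ensure $m_a\neq n_b$ in all four terms). The paper omits this proof as straightforward, and your argument is evidently the intended one.
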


%\begin{proof}
%Since $\{i_1,i_2\}\cap \{j_1,j_2\}=\emptyset$, from~\eqref{eq:XiI} it follows 
%\begin{equation*}
%k_{i_1,j_1}=\left|\overline{X}_{i_1}\right|-\left|X_{j_1}-\cup_{i\in [N]\setminus \{j_1\}} X_i\right|,
%\end{equation*} and 
%\begin{equation*}
%k_{i_1,j_2}=\left|\overline{X}_{i_1}\right|-\left|X_{j_2}-\cup_{i\in [N]\setminus \{j_2\}} X_i\right|.
%\end{equation*} Thus, 
%\begin{dmath}\label{eq:SCki1j1i1j2}
%k_{i_1,j_1}-k_{i_1,j_2}=\left|X_{j_2}-\cup_{i\in [N]\setminus \{j_2\}} X_i\right| - \left|X_{j_1}-\cup_{i\in [N]\setminus \{j_1\}} X_i\right|.	
%\end{dmath} Similarly, 
%\begin{dmath}\label{eq:SCki2j1i2j2}
%k_{i_2,j_1}-k_{i_2,j_2}=\left|X_{j_2}-\cup_{i\in [N]\setminus \{j_2\}} X_i\right| - \left|X_{j_1}-\cup_{i\in [N]\setminus \{j_1\}} X_i\right|.	
%\end{dmath} By~\eqref{eq:SCki1j1i1j2} and~\eqref{eq:SCki2j1i2j2}, the lemma follows immediately.$\hfill \square$
%\end{proof}

\begin{lemma}\label{lem:SCkPairs2}
For every $1\leq n<m_1\leq m_2\leq N$, 
\begin{dmath}\label{eq:kij1j2}	
k_{m_1,n}-k_{m_2,n}\geq k_{n,m_1}-k_{n,m_2}.	
\end{dmath}	
\end{lemma}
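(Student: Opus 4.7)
The plan is to exploit a clean additive closed form for $k_{i,j}$ and reduce the stated inequality to the monotonicity already guaranteed by the relabeling assumption of Theorem~\ref{thm:SCOverConstrainedReducedLPSolution}.

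First I would establish the identity
\begin{equation*}
k_{i,j} \;=\; |\overline{X}_i| \;-\; |Y_j|,\qquad \forall\, i\neq j,
\end{equation*}
where $Y_j := X_j \setminus \bigcup_{k\in[N]\setminus\{j\}} X_k$ denotes the set of packets held \emph{exclusively} by client $j$. This follows from \eqref{eq:XiI} together with the standing assumption $X=\bigcup_k X_k$: we have $\bigcup_{k\neq j} X_k = X \setminus Y_j$, so $\overline{X}_{i,\{j\}} = \overline{X}_i \setminus Y_j$; and since $Y_j\cap X_i=\emptyset$ for $i\neq j$, we have $Y_j\subseteq \overline{X}_i$, so cardinalities subtract cleanly.

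Next I would handle the trivial boundary $m_1=m_2$ (both sides of \eqref{eq:kij1j2} vanish), and for the nontrivial case $m_1<m_2$ substitute the identity. The $|Y_n|$ terms on the left-hand side and the $|\overline{X}_n|$ terms on the right-hand side each cancel, leaving
\begin{equation*}
k_{m_1,n}-k_{m_2,n} \;=\; |\overline{X}_{m_1}|-|\overline{X}_{m_2}|, \qquad k_{n,m_1}-k_{n,m_2} \;=\; |Y_{m_2}|-|Y_{m_1}|.
\end{equation*}
Hence \eqref{eq:kij1j2} is equivalent to the single inequality
\begin{equation*}
|\overline{X}_{m_1}|+|Y_{m_1}| \;\geq\; |\overline{X}_{m_2}|+|Y_{m_2}|.
\end{equation*}

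Finally, this monotonicity is exactly what the relabeling of Theorem~\ref{thm:SCOverConstrainedReducedLPSolution} enforces: since $m_1<m_2$, the relabeling assumption gives $k_{m_1,m_2}\geq k_{m_2,m_1}$, and expanding both sides through the additive identity and rearranging yields precisely the displayed inequality. The main (indeed essentially the only) obstacle is recognizing the decomposition $k_{i,j}=|\overline{X}_i|-|Y_j|$; once it is in hand, the rest is routine algebraic manipulation and a direct invocation of the index ordering. As a sanity check, the same identity immediately reproduces Lemma~\ref{lem:SCkPairs} (the difference $k_{m,n_1}-k_{m,n_2}=|Y_{n_2}|-|Y_{n_1}|$ does not depend on $m$ when $m\notin\{n_1,n_2\}$).
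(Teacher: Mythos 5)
Your proof is correct, and it is exactly the ``straightforward'' argument the paper omits: the decomposition $k_{i,j}=|\overline{X}_i|-|Y_j|$ with $Y_j=X_j\setminus\bigcup_{k\neq j}X_k$ is precisely the identity the paper itself uses (for general $\mathcal{I}$) at the start of the proof of Lemma~\ref{lem:SLLNResults}, and from it both the reduction of \eqref{eq:kij1j2} to $|\overline{X}_{m_1}|+|Y_{m_1}|\geq|\overline{X}_{m_2}|+|Y_{m_2}|$ and the identification of that inequality with the relabeling condition $k_{m_1,m_2}\geq k_{m_2,m_1}$ of Theorem~\ref{thm:SCOverConstrainedReducedLPSolution} are immediate. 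Your observation that the same identity yields Lemma~\ref{lem:SCkPairs} (and, incidentally, the transitivity claim in the theorem's relabeling footnote, since $k_{i,j}\geq k_{j,i}$ is a total preorder induced by the scalar $|\overline{X}_i|+|Y_i|$) confirms you have the intended mechanism.
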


%\begin{proof}
%By~\eqref{eq:XiI}, we have 
%\begin{dmath*}
%k_{i_1,j}-k_{i_2,j} = \left|\overline{X}_{i_1}\right|-\left|\overline{X}_{i_2}\right|,
%\end{dmath*} and 
%\begin{dmath*}
%k_{j,i_1}-k_{j,i_2} = \left|X_{i_2}-\cup_{i\in [N]\setminus \{i_2\}} X_i\right| - \left|X_{i_1}-\cup_{i\in [N]\setminus \{i_1\}} X_i\right|.
%\end{dmath*} Thus, 
%\begin{eqnarray}
%\nonumber k_{i_1,j}-k_{i_2,j}-	k_{j,i_1}+k_{j,i_2} & = & \left|\overline{X}_{i_1}\right|-\left|\overline{X}_{i_2}\right|\\ 
%\nonumber && -\left|X_{i_2}-\cup_{i\in [N]\setminus \{i_2\}} X_i\right| \\ 
%\nonumber && + \left|X_{i_1}-\cup_{i\in [N]\setminus \{i_1\}} X_i\right| \\ 
%\nonumber &=& k_{i_1,i_2}-k_{i_2,i_1} \\ 
%\nonumber &\geq & 0,
%\end{eqnarray} since $i_1\leq i_2$ (by assumption). $\hfill \square$
%\end{proof}

For every $1\leq m<n\leq N$, let $\lambda_{m,n}$ be defined as
\begin{dmath}\label{eq:SClambdadef}\lambda_{m,n}=\left\{
\begin{array}{l}
k_{m,N}-k_{1,N}+k_{1,N-Q} \\ \hspace{0.5cm} -k_{N-Q,N}+k_{n,N}, \hspace{5pt}\hfill m< n\leq N-Q,\\ 
k_{m,N}-k_{1,N}+k_{1,n}, \hfill m\leq N-Q< n,\\
k_{1,m}-k_{1,N}+k_{N-Q,N}\\ \hspace{0.5cm} -k_{1,N-Q}+k_{1,n}, \hspace{5pt}
\hfill N-Q<m<n.
\end{array} 
\right.\end{dmath} By applying the results of lemmas~\ref{lem:SCkPairs} and~\ref{lem:SCkPairs2}, the following result can then be shown. (The proof is omitted due to the lack of space.)

\begin{lemma}\label{lem:SCLambdakComp}
For every $1\leq m<n\leq N$, \begin{equation*}\lambda_{m,n}\geq k_{m,n}.\end{equation*}
\end{lemma}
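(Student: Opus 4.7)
The plan is to prove $\lambda_{m,n}\ge k_{m,n}$ case by case, where the three cases match the three branches in definition~\eqref{eq:SClambdadef}. In each case I use Lemma~\ref{lem:SCkPairs} repeatedly to telescope the right-hand side of~\eqref{eq:SClambdadef} into an expression that differs from $k_{m,n}$ by a single asymmetry term of the form $k_{a,b}-k_{b,a}$; the re-labeling assumption $k_{i,j}\ge k_{j,i}$ for $i<j$ then pins down the sign.

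For Case~2 ($m\le N-Q<n$), the hypothesis $m<n$, $m\le N-Q<N$, $n>N-Q\ge 1$ forces $\{1,m\}\cap\{n,N\}=\emptyset$, so a single invocation of Lemma~\ref{lem:SCkPairs} gives $k_{m,N}-k_{1,N}=k_{m,n}-k_{1,n}$. Substituting this into~\eqref{eq:SClambdadef} yields $\lambda_{m,n}=k_{m,n}$ exactly (with the degenerate subcase $m=1$ immediate). This case is trivial and uses only Lemma~\ref{lem:SCkPairs}.

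For Case~1 ($m<n\le N-Q$), I first apply Lemma~\ref{lem:SCkPairs} to collapse $k_{m,N}-k_{1,N}+k_{1,N-Q}$ into $k_{m,N-Q}$. Two further applications then rewrite $k_{n,N}-k_{N-Q,N}$ as $(k_{n,N-Q}-k_{N-Q,n})+(k_{m,n}-k_{m,N-Q})$, where the last step uses Lemma~\ref{lem:SCkPairs} to swap $k_{1,n}-k_{1,N-Q}$ with $k_{m,n}-k_{m,N-Q}$. After cancellation this produces the compact identity
\[
\lambda_{m,n}-k_{m,n}=k_{n,N-Q}-k_{N-Q,n},
\]
whose right-hand side is nonnegative by the re-labeling since $n\le N-Q$. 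Case~3 ($N-Q<m<n$) is handled by a symmetric telescoping: Lemma~\ref{lem:SCkPairs} first rewrites $k_{1,m}-k_{1,N}+k_{N-Q,N}$ as $k_{N-Q,m}$, and then converts $k_{1,n}-k_{1,N-Q}$ into $k_{m,n}-k_{m,N-Q}$, yielding $\lambda_{m,n}-k_{m,n}=k_{N-Q,m}-k_{m,N-Q}\ge 0$ since $N-Q<m$.

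The main obstacle is the bookkeeping of the disjointness conditions required by each application of Lemma~\ref{lem:SCkPairs}, together with the boundary subcases in which some indices among $\{1,m,n,N-Q,N\}$ coincide (e.g.\ $m=1$, $n=N-Q$, or $n=N$); in each such subcase the five-term expression defining $\lambda_{m,n}$ simply collapses and equality $\lambda_{m,n}=k_{m,n}$ follows by direct inspection. Lemma~\ref{lem:SCkPairs2} enters as the ingredient underlying the footnote's transitivity claim, ensuring that the single global re-labeling $k_{i,j}\ge k_{j,i}$ for all $i<j$ is well-defined, so that the asymmetry inequalities $k_{n,N-Q}\ge k_{N-Q,n}$ and $k_{N-Q,m}\ge k_{m,N-Q}$ used in Cases~1 and~3 hold simultaneously under one consistent labeling.
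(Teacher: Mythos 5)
Your proof is correct, and it follows exactly the route the paper itself prescribes (the paper omits the details, saying only that the lemma follows ``by applying the results of lemmas~\ref{lem:SCkPairs} and~\ref{lem:SCkPairs2}''): telescoping each branch of~\eqref{eq:SClambdadef} via Lemma~\ref{lem:SCkPairs} until $\lambda_{m,n}-k_{m,n}$ reduces to a single asymmetry term $k_{a,b}-k_{b,a}$ whose sign is fixed by the re-labeling $k_{i,j}\geq k_{j,i}$ for $i<j$ (which is what Lemma~\ref{lem:SCkPairs2} encodes). Your three identities --- $\lambda_{m,n}-k_{m,n}=k_{n,N-Q}-k_{N-Q,n}$ for $m<n\leq N-Q$, $\lambda_{m,n}=k_{m,n}$ for $m\leq N-Q<n$, and $\lambda_{m,n}-k_{m,n}=k_{N-Q,m}-k_{m,N-Q}$ for $N-Q<m<n$ --- all check out, including the disjointness conditions and the degenerate subcases.
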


\vspace{0.25cm}
We now construct LP~\eqref{eq:SCReducedExpNewLP} (by over-constraining LP~\eqref{eq:SCFinalExpNewLP}): 
\begin{eqnarray}\label{eq:SCReducedExpNewLP}
\mathrm{minimize} && \sum_{i=1}^{N} r_i, \\ 
\nonumber \mbox{s.t.} && \hspace{-0.25cm}\sum_{i\in\mathcal{N}_{m,n}} r_i\geq \lambda_{m,n}, \hspace{5pt} \forall 1\leq m<n\leq N.
\end{eqnarray} (By the result of Lemma~\ref{lem:SCLambdakComp}, it is easy to see the constraints in~\eqref{eq:SCReducedExpNewLP} are stronger than those in~\eqref{eq:SCFinalExpNewLP}.) 

We now prove the theorem in two steps: (i) we show $\{\tilde{r}_i\}$ gives an exact solution to LP~\eqref{eq:SCReducedExpNewLP}, and (ii) we show $\{\tilde{r}_i\}$ is an exact solution to LP~\eqref{eq:FinalExpNewLP} and LP~\eqref{eq:ExpNewLP}.

\begin{lemma}\label{lem:SCReducedExpNewLPSolution}
$\{\tilde{r}_i\}$ is an exact solution to LP~\eqref{eq:SCReducedExpNewLP}.	
\end{lemma}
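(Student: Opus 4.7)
The plan is to prove that $\{\tilde{r}_i\}$ is both feasible and optimal for LP~\eqref{eq:SCReducedExpNewLP}, in the spirit of Lemma~\ref{lem:ReducedExpNewLPSolution}, but exploiting a crucial simplification: every constraint of LP~\eqref{eq:SCReducedExpNewLP} will turn out to be tight at $\{\tilde{r}_i\}$.

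\textbf{Feasibility.} I would rewrite each constraint indexed by $1 \leq m < n \leq N$ as $\sum_{i=1}^N \tilde{r}_i - \tilde{r}_m - \tilde{r}_n \geq \lambda_{m,n}$ and expand both sides using the piecewise formula~\eqref{eq:SCtilder} for $\tilde{r}_i$ and the appropriate branch of~\eqref{eq:SClambdadef} for $\lambda_{m,n}$. The key observation to verify is that, in all three cases ($m<n\leq N-Q$, $m\leq N-Q<n$, and $N-Q<m<n$), the $(m,n)$-dependent $k_{i,j}$ terms cancel exactly between the two sides, reducing every constraint to the same $(m,n)$-free inequality
\[
\sum_{i=1}^{N} \tilde{r}_i \;\geq\; 2\tilde{r} - k_{1,N-Q} - k_{1,N} - k_{N-Q,N}.
\]
Expanding $\sum_i \tilde{r}_i$ via~\eqref{eq:SCtilder} and substituting the value of $\tilde{r}$ given in the theorem, a routine telescoping computation then shows that this inequality in fact holds with \emph{equality}. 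Hence every constraint of LP~\eqref{eq:SCReducedExpNewLP} is tight at $\{\tilde{r}_i\}$, which yields feasibility.

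\textbf{Optimality.} The dual of LP~\eqref{eq:SCReducedExpNewLP} has non-negative variables $s_{\mathcal{N}_{m,n}}$, one per pair $m<n$, subject to $\sum_{(m,n):\,i\in\mathcal{N}_{m,n}} s_{\mathcal{N}_{m,n}} \leq 1$ for each $i\in[N]$. I would pick any collection $\mathbb{S}$ of $N$ pairs forming a $2$-regular multigraph on $[N]$ (for concreteness, the Hamiltonian cycle $\{(1,2),(2,3),\dots,(N-1,N),(N,1)\}$), so that each $i\in[N]$ is excluded from exactly two of the subsets $\mathcal{N}_{m,n}$ indexed by $\mathbb{S}$, i.e., lies in exactly $P=N-2$ of them. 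Setting $s_{\mathcal{N}_{m,n}}=1/P$ for $(m,n)\in\mathbb{S}$ and $0$ otherwise makes every dual constraint hold with equality, yielding a feasible dual point. Since the feasibility step established $\lambda_{m,n} = \sum_i \tilde{r}_i - \tilde{r}_m - \tilde{r}_n$ for every pair, and $\sum_{(m,n)\in\mathbb{S}}(\tilde{r}_m+\tilde{r}_n)=2\sum_i \tilde{r}_i$ by $2$-regularity, the dual objective evaluates to
\[
\frac{1}{P}\sum_{(m,n)\in\mathbb{S}} \lambda_{m,n} \;=\; \frac{1}{P}\Bigl(N\sum_i \tilde{r}_i - 2\sum_i \tilde{r}_i\Bigr) \;=\; \sum_i \tilde{r}_i.
\]
Matching primal and dual objectives proves optimality via weak duality.

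The main obstacle is the algebraic bookkeeping in the feasibility step, namely checking that in each of the three piecewise cases of $\lambda_{m,n}$ the corresponding piece of $\tilde{r}_m+\tilde{r}_n$ cancels exactly the $(m,n)$-dependent $k_{i,j}$ terms, leaving the common $(m,n)$-free inequality above. Once that collapse is verified, universal tightness of the primal constraints makes the $2$-regular dual construction succeed for \emph{any} choice of $N$ edges, bypassing the delicate recursive selection of Lemma~\ref{lem:Optimality}.
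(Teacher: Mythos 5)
Your proposal is correct, and the optimality half takes a genuinely different and simpler route than the paper. For feasibility, the paper observes that the additive structure of $\lambda_{m,n}$ reduces the verification to the $N$ constraints indexed by $(j,N)$, $1\leq j\leq N-Q$, and $(1,j)$, $N-Q\leq j<N$ (where $\lambda$ coincides with $k$), and asserts these hold with equality; your computation that \emph{every} constraint collapses to the single $(m,n)$-free identity $\sum_i\tilde{r}_i = 2\tilde{r}-k_{1,N-Q}-k_{1,N}-k_{N-Q,N}$ is a cleaner way to reach the same conclusion, and I verified the cancellation in all three branches of~\eqref{eq:SClambdadef} as well as the final identity using $P=N-2$. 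The real divergence is in optimality: the paper constructs a specific dual support $\mathbb{S}$ via a four-way case analysis on the parity of $Q$ and the value of $R$, then evaluates $\sum_{m,n}\lambda_{m,n}s_{\mathcal{N}_{m,n}}$ explicitly through~\eqref{eq:SClambdadef}; you instead exploit universal tightness of the primal constraints so that \emph{any} $2$-regular collection of $N$ pairs (e.g., a Hamiltonian cycle) gives a feasible dual point whose objective telescopes to $\sum_i\tilde{r}_i$ without any case analysis. This is a legitimate simplification for the lemma as stated. One thing your shortcut gives up: the paper's particular $\mathbb{S}$ is engineered so that every chosen pair satisfies $m\leq N-Q<n$, hence $\lambda_{m,n}=k_{m,n}$ on the dual support — a property that is reused in Lemma~\ref{lem:SCOptimalityFinal} to transfer optimality from the over-constrained LP~\eqref{eq:SCReducedExpNewLP} to the original LP~\eqref{eq:SCFinalExpNewLP}. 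A Hamiltonian-cycle support contains pairs with $m<n\leq N-Q$ or $N-Q<m<n$, for which $\lambda_{m,n}$ may strictly exceed $k_{m,n}$, so your choice could not simply be substituted into that later argument; it suffices only for the present lemma.
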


\begin{proof} We prove the feasibility and the optimality of $\{\tilde{r}_i\}$ to LP~\eqref{eq:SCReducedExpNewLP} in lemmas~\ref{lem:SCFeasibility} and~\ref{lem:SCOptimality}, respectively.
\end{proof}

\begin{lemma}\label{lem:SCFeasibility}
$\{\tilde{r}_i\}$ is feasible with respect to LP~\eqref{eq:SCReducedExpNewLP}.	
\end{lemma}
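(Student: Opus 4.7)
The plan is to prove the stronger fact that every constraint of LP~\eqref{eq:SCReducedExpNewLP} is \emph{saturated} by $\{\tilde{r}_i\}$, i.e., $\sum_{i \in \mathcal{N}_{m,n}} \tilde{r}_i = \lambda_{m,n}$ for every $1 \leq m < n \leq N$. Since $\sum_{i \in \mathcal{N}_{m,n}} \tilde{r}_i = \sum_{i=1}^N \tilde{r}_i - \tilde{r}_m - \tilde{r}_n$, the task reduces to a single algebraic identity for $\sum_{i=1}^N \tilde{r}_i - 2\tilde{r}$, followed by a three-branch case analysis that mirrors the definition~\eqref{eq:SClambdadef} of $\lambda_{m,n}$.

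The first step is to establish
\[
\sum_{i=1}^N \tilde{r}_i - 2\tilde{r} \;=\; -k_{1,N} - k_{1,N-Q} - k_{N-Q,N}.
\]
Splitting $\sum_i \tilde{r}_i$ via the two branches of~\eqref{eq:SCtilder} gives $N\tilde{r} - \sum_{i=1}^{N-Q} k_{i,N} - (N-Q)k_{1,N-Q} - Qk_{N-Q,N} - \sum_{i=N-Q+1}^N k_{1,i}$. Because $M=1$ forces $P = N-2$, the coefficient $N-2$ of $\tilde{r}$ on the left equals $P$, so $(N-2)\tilde{r}$ may be replaced by the right-hand side of the definition of $\tilde{r}$. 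The bulk sums then cancel, leaving precisely the three ``boundary'' residuals $-k_{1,N}$, $-k_{1,N-Q}$, and $-k_{N-Q,N}$.

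The second step is a three-way case split on $(m,n)$ matching~\eqref{eq:SClambdadef}. In Case~(I), $m < n \leq N-Q$, both $\tilde{r}_m$ and $\tilde{r}_n$ come from the first branch of~\eqref{eq:SCtilder}, so $\tilde{r}_m + \tilde{r}_n = 2\tilde{r} - k_{m,N} - k_{n,N} - 2k_{1,N-Q}$; subtracting this from the identity above yields $k_{m,N} + k_{n,N} + k_{1,N-Q} - k_{1,N} - k_{N-Q,N}$, which is the first branch of $\lambda_{m,n}$. In Case~(II), $m \leq N-Q < n$, the expressions mix one branch of each type and similarly collapse to $k_{m,N} - k_{1,N} + k_{1,n}$. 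In Case~(III), $N-Q < m < n$, both $\tilde{r}_m$ and $\tilde{r}_n$ use the second branch, and the residuals combine to $k_{1,m} - k_{1,N} + k_{N-Q,N} - k_{1,N-Q} + k_{1,n}$. Each matches the corresponding branch of~\eqref{eq:SClambdadef} exactly.

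The main obstacle is bookkeeping rather than any conceptual hurdle: one must be careful at the boundary values $m = N-Q$ and $n = N-Q$ to apply the correct branch of~\eqref{eq:SCtilder}, and one must verify that the indices on the surviving $k_{i,j}$ terms indeed satisfy $i < j$ so that the notation $k_{i,j}$ is the intended one. Notably, feasibility with respect to the over-constrained LP~\eqref{eq:SCReducedExpNewLP} requires neither the re-labeling hypothesis $k_{i,j} \geq k_{j,i}$ nor the inequalities of lemmas~\ref{lem:SCkPairs} and~\ref{lem:SCkPairs2}; these enter only through Lemma~\ref{lem:SCLambdakComp} (which bridges LP~\eqref{eq:SCReducedExpNewLP} back to LP~\eqref{eq:SCFinalExpNewLP}) and through the subsequent optimality argument.
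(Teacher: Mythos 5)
Your proposal is correct and takes essentially the same route as the paper: the paper also argues that $\{\tilde{r}_i\}$ saturates every constraint of LP~\eqref{eq:SCReducedExpNewLP}, reducing to $N$ ``basis'' constraints via the additive structure of~\eqref{eq:SClambdadef} and omitting the final computation as straightforward. Your identity $\sum_{i=1}^N \tilde{r}_i - 2\tilde{r} = -k_{1,N}-k_{1,N-Q}-k_{N-Q,N}$ and the three-case check (which I verified) simply carry out explicitly what the paper leaves to the reader, and your closing remark that the re-labeling hypothesis and lemmas~\ref{lem:SCkPairs}--\ref{lem:SCkPairs2} are not needed here is accurate.
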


\begin{proof} Since $\lambda_{j,N}=k_{j,N}$ ($1\leq j\leq N-Q$) and $\lambda_{1,j}=k_{1,j}$ ($N-Q\leq j< N$), by~\eqref{eq:SClambdadef} it is not hard to see that $\{\tilde{r}_i\}$ meets (with equality) every inequality in LP~\eqref{eq:SCReducedExpNewLP} so long as $\{\tilde{r}_i\}$ meets (with equality) the $N$ inequalities: 
\begin{eqnarray*}
\sum_{i\in \mathcal{N}_{j,N}} r_i &\geq & \lambda_{j,N}, \hspace{5pt} \forall 1\leq j\leq N-Q,\\
\sum_{i\in \mathcal{N}_{1,j}} r_i &\geq & \lambda_{1,j}, \hspace{5pt} \forall N-Q\leq j<N,	
\end{eqnarray*} Furthermore, the proof of the latter is straightforward (and hence omitted). 
\end{proof}

\begin{lemma}\label{lem:SCOptimality}
$\{\tilde{r}_i\}$ is optimal with respect to LP~\eqref{eq:SCReducedExpNewLP}.	
\end{lemma}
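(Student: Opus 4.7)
The plan is to prove optimality via LP duality: exhibit a dual-feasible $\{s_{m,n}\}$ whose objective matches $\sum_i\tilde{r}_i$ from~\eqref{eq:SCOptimalValuePrimal}. Because the primal variables $r_i$ are free, the dual of LP~\eqref{eq:SCReducedExpNewLP} reads
\begin{eqnarray*}
\mathrm{maximize} && \sum_{1\leq m<n\leq N}\lambda_{m,n}\,s_{m,n},\\
\mbox{s.t.} && \sum_{(m,n):\,i\notin\{m,n\}} s_{m,n}=1,\quad\forall\,i\in[N],\\
&& s_{m,n}\geq 0,
\end{eqnarray*}
and summing the $N$ equality constraints pins down the total dual mass as $T=N/P$ and the weighted degree at every vertex as $T-1=2/P$.

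The naive placement of mass $1/P$ on each of the $N$ tight constraints from Lemma~\ref{lem:SCFeasibility} does not succeed: the resulting $N\times N$ system forces $s_{1,N}=(2-P)/P<0$ once $P\geq 3$. The cleaner strategy exploits the observation that, by~\eqref{eq:SClambdadef}, every $\lambda_{m,n}$ is a linear combination only of the $k$-symbols already appearing in~\eqref{eq:SCOptimalValuePrimal}. Expanding $\sum s_{m,n}\lambda_{m,n}$ coefficient by coefficient and invoking the degree constraints shows that, for any dual-feasible $\{s_{m,n}\}$, the coefficients of $k_{i,N}$ for $2\leq i\leq N-Q-1$, of $k_{1,i}$ for $N-Q+1\leq i\leq N-1$, and of $k_{1,N}$ come out to $T-1=2/P$, $T-1=2/P$, and $T-2=(2-P)/P$ respectively---exactly matching~\eqref{eq:SCOptimalValuePrimal}. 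The coefficients of $k_{N-Q,N}$ and $k_{1,N-Q}$ further reduce to $(T-1)-(|S_1|-|S_3|)$ and $|S_1|-|S_3|$, where $|S_1|$, $|S_2|$, $|S_3|$ denote the aggregate mass on left-left pairs ($m<n\leq N-Q$), left-right pairs ($m\leq N-Q<n$), and right-right pairs ($N-Q<m<n$), respectively; these match~\eqref{eq:SCOptimalValuePrimal} iff the single budget $|S_1|-|S_3|=(2-R)/P$ holds. Coefficients of all other $k$'s vanish on both sides.

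It thus suffices to construct a nonnegative $\{s_{m,n}\}$ with uniform weighted vertex-degree $2/P$ and $|S_1|-|S_3|=(2-R)/P$. For $R=0$ ($N$ even, so $N-Q=Q+2$), I would take $|S_3|=0$, place weight $2/P$ on each of $Q$ disjoint left-right edges matching all $Q$ right vertices with $Q$ of the $Q+2$ left vertices, and place weight $2/P$ on the left-left edge joining the two remaining left vertices; this yields $|S_1|=2/P=(2-R)/P$. For $R=1$ ($N$ odd, so $N-Q=Q+1$), I would match only $Q-1$ right vertices with $Q-1$ left vertices at weight $2/P$, then absorb the three leftover vertices (two left, one right) with a triangle of three edges each of weight $1/P$; this yields $|S_1|=1/P=(2-R)/P$ and $|S_3|=0$. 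The main technical check will be verifying that the triangle-plus-matching construction in the $R=1$ case indeed produces weighted degree $2/P$ at every vertex---which follows from a direct count, since each matched vertex has weight $2/P$ from its single incident edge, while each triangle vertex has weight $1/P+1/P=2/P$ from its two incident triangle edges. Once the construction is in place, the coefficient-matching above gives $\sum s_{m,n}\lambda_{m,n}=\sum_i\tilde{r}_i$, and optimality of $\{\tilde{r}_i\}$ follows by strong LP duality.
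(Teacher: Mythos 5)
Your proof is correct, and it rests on the same high-level strategy as the paper's --- exhibit a nonnegative dual certificate whose objective equals $\sum_i\tilde{r}_i$ and invoke duality --- but both the construction and the verification are genuinely different. The paper places uniform mass $1/P$ on an explicit list of $N$ pairs organized into a partition $\{\mathbb{S}^{(1)},\dots,\mathbb{S}^{(Q+1)}\}$, spelled out separately for four cases (parity of $Q$ crossed with $R\in\{0,1\}$), and then verifies $\sum_{m,n}\lambda_{m,n}s_{\mathcal{N}_{m,n}}=\sum_i\tilde{r}_i$ by direct substitution of~\eqref{eq:SClambdadef} (carried out only for one of the four cases). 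Your invariance argument is the main added value: writing $\lambda_{m,n}=\sigma_m+\sigma_n-k_{1,N}+\delta_{m,n}\left(k_{1,N-Q}-k_{N-Q,N}\right)$ with $\sigma_i=k_{i,N}$ for $i\leq N-Q$, $\sigma_i=k_{1,i}$ otherwise, and $\delta_{m,n}=+1,0,-1$ on left--left, left--right, right--right pairs, the dual objective of \emph{any} feasible certificate is pinned down by the forced uniform weighted degree $2/P$ together with the single scalar $|S_1|-|S_3|$; I checked that the resulting coefficients (including $(4-N)/P=(2-P)/P$ on $k_{1,N}$, which absorbs both $\sigma_1$ and $\sigma_N$) match~\eqref{eq:SCOptimalValuePrimal} exactly, and that your matching-plus-edge ($R=0$) and matching-plus-triangle ($R=1$) constructions satisfy both the degree condition and $|S_1|-|S_3|=(2-R)/P$. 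This collapses the case analysis from four to two, makes the choice of matched vertices manifestly irrelevant, and replaces the paper's term-by-term substitution with one linear budget check. Two minor remarks: you write the dual constraints as equalities (treating the $r_i$ as free) where the paper writes inequalities; since your certificate meets them with equality, weak duality applies either way. And your aside that the naive complementary-slackness system on the $N$ tight constraints of Lemma~\ref{lem:SCFeasibility} forces a negative multiplier is a correct observation that explains why some such detour is needed, though it is not load-bearing.
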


\begin{proof} The dual of LP~\eqref{eq:SCReducedExpNewLP} is given by
\begin{eqnarray}\label{eq:SCDualReducedExpNewLP}
\mathrm{maximize} && \sum_{m,n} \lambda_{m,n} s_{\mathcal{N}_{m,n}}, \\ 
\nonumber \mbox{s.t.} && \sum_{m,n} s_{\mathcal{N}_{m,n}} \mathds{1}_{\{i\notin\{m,n\}\}} \leq 1, \hspace{5pt} \forall 1\leq i\leq N\\ 
\nonumber && \forall 1\leq m<n\leq N,\\ 
\nonumber && (s_{\mathcal{N}_{m,n}}\geq 0).
\end{eqnarray} Similar to the proof of Lemma~\ref{lem:Optimality}, we construct the set $\mathbb{S} = \{{\mathbb{S}}^{(1)},\dots,{\mathbb{S}}^{(Q+1)}\}$ of $N$ subsets $\mathcal{N}_{m,n}$ such that $\{s_{\mathcal{N}_{m,n}}\}$ is feasible to LP~\eqref{eq:SCDualReducedExpNewLP}, where $s_{\mathcal{N}_{m,n}}=\frac{1}{P}$, for every $\mathcal{N}_{m,n}\in\mathbb{S}$, and $s_{\mathcal{N}_{m,n}}=0$, for every $\mathcal{N}_{m,n}\notin \mathbb{S}$. Considering four cases (depending on $Q$ and $R$), we construct the partitions $\{\mathbb{S}^{(j)}\}$:  
\begin{itemize}
\item[(i)] $Q$ odd, $R=0$: $\mathbb{S}^{(j)}=\{\mathcal{N}_{2j-1,N-j+1},$ $\mathcal{N}_{2j,N-j+1}\}$, $1\leq j\leq \frac{Q+1}{2}$, and $\mathbb{S}^{(j)}=\{\mathcal{N}_{2j-Q-2,N-j+1},$ $\mathcal{N}_{2j-Q-1,N-j+1}\}$, $\frac{Q+1}{2}< j\leq Q+1$.
\item[(ii)] $Q$ odd, $R=1$: $\mathbb{S}^{(j)}=\{\mathcal{N}_{2j-1,N-j+1},$ $\mathcal{N}_{2j,N-j+1}\}$, $1\leq j\leq \frac{Q+1}{2}$, and $\mathbb{S}^{(j)}=\{\mathcal{N}_{2j-Q-2,N-j+1},$ $\mathcal{N}_{2j-Q-1,N-j+1}\}$, $\frac{Q+1}{2}< j\leq Q$, and $\mathbb{S}^{(Q+1)}=\{\mathcal{N}_{Q,N-Q}\}$.
\item[(iii)] $Q$ even, $R=0$:	$\mathbb{S}^{(j)}=\{\mathcal{N}_{2j-1,N-j+1},$ $\mathcal{N}_{2j,N-j+1}\}$, $1\leq j\leq \frac{Q}{2}$, and $\mathbb{S}^{(\frac{Q}{2}+1)}=\{\mathcal{N}_{Q+1,N-\frac{Q}{2}},$ $\mathcal{N}_{1,N-\frac{Q}{2}}\}$, and $\mathbb{S}^{(j)}=\{\mathcal{N}_{2j-Q-2,N-j},$ $\mathcal{N}_{2j-Q-1,N-j}\}$, $\frac{Q}{2}+1< j\leq Q+1$.
\item[(iv)] $Q$ even, $R=1$:	$\mathbb{S}^{(j)}=\{\mathcal{N}_{2j-1,N-j+1},$ $\mathcal{N}_{2j,N-j+1}\}$, $1\leq j\leq \frac{Q}{2}$, and $\mathbb{S}^{(\frac{Q}{2}+1)}=\{\mathcal{N}_{Q+1,N-\frac{Q}{2}},$ $\mathcal{N}_{1,N-\frac{Q}{2}}\}$, and $\mathbb{S}^{(j)}=\{\mathcal{N}_{2j-Q-2,N-j},$ $\mathcal{N}_{2j-Q-1,N-j}\}$, $\frac{Q}{2}+1< j\leq Q$, and $\mathbb{S}^{(Q+1)}=\{\mathcal{N}_{Q,N-Q}\}$.
\end{itemize} In each case (i)--(iv), it is easy to see $\{i\}$, $1\leq i\leq N$, belongs to $P$ subsets $\mathcal{N}_{m,n}\in\mathbb{S}$. Thus, 
\begin{equation*}\sum_{m,n} s_{\mathcal{N}_{m,n}} \mathds{1}_{\{i\notin\{m,n\}\}} = 1,\hspace{5pt} \forall 1\leq i\leq N.
\end{equation*} This confirms the feasibility of $\{s_{\mathcal{N}_{m,n}}\}$. Then, by the duality principle, it suffices to show that \[\sum_{i=1}^{N}\tilde{r}_i=\sum_{m,n}\lambda_{m,n}s_{\mathcal{N}_{m,n}}.\] %The rest of the proof is straightforward by using~\eqref{eq:SClambdadef} together with the structure of $\{\mathbb{S}^{(j)}\}$ for each case (i)--(iv).   

We only give the proof for the case (i) here (and the proofs for the other cases are similar). In the case (i), by our choice of $\mathbb{S}$, $\sum_{m,n}\lambda_{m,n}s_{\mathcal{N}_{m,n}}$ equals to \vspace{-0.25cm}
\begin{dmath}\label{eq:SCCaseI1}
\sum_{j=1}^{\frac{Q+1}{2}}\frac{1}{P}\left(\lambda_{2j-1,N-j+1}+\lambda_{2j,N-j+1}\right) + \sum_{j=\frac{Q+3}{2}}^{Q+1}\frac{1}{P}\left(\lambda_{2j-Q-2,N-j+1}+\lambda_{2j-Q-1,N-j+1}\right)
\end{dmath} By using~\eqref{eq:SClambdadef},~\eqref{eq:SCCaseI1} can be written as
\begin{equation*}
\frac{2-2Q}{P}k_{1,N}+\sum_{i=2}^{Q+1}\frac{2}{P}k_{i,N} + \sum_{i=N-Q}^{N-1}\frac{2}{P}k_{1,i},
\end{equation*} which equals to $\sum_{i=1}^{N}\tilde{r}_i$, since in the case (i) $R=0$ (by assumption) and thus $N-Q-1=Q+1$ and $P=2Q$.
\end{proof}

 % = \frac{2-P}{P}k_{1,N}+\sum_{i=N-Q}^{N-1}\frac{2}{P}k_{1,i}+\sum_{i=2}^{N-Q-1}\frac{2}{P}k_{i,N} 
 
% we can write
%\begin{dmath}\label{eq:SCCaseI2}
%\sum_{j=1}^{\frac{Q+1}{2}}\frac{1}{P}\left(\lambda_{2j-1,N-j+1}+\lambda_{2j,N-j+1}\right) = \frac{2-Q}{P}k_{1,N}+\sum_{i=N-\frac{Q-1}{2}}^{N-1}\frac{2}{P}k_{1,i}+\sum_{i=2}^{Q+1}\frac{1}{P}k_{i,N},
%\end{dmath} and % = -\frac{Q+1}{P}k_{1,N}+\sum_{j=1}^{\frac{Q+1}{2}} \frac{1}{P}\left(2k_{1,N-j+1}+k_{2j-1,N}+k_{2j,N}\right) = -\frac{Q+1}{P}k_{1,N}+\sum_{i=N-\frac{Q-1}{2}}^{N}\frac{2}{P}k_{1,i}+\sum_{i=1}^{Q+1}\frac{1}{P}k_{i,N} 
%\begin{dmath}\label{eq:SCCaseI3}
%\sum_{j=\frac{Q+3}{2}}^{Q+1}\frac{1}{P}\left(\lambda_{2j-Q-2,N-j+1}+\lambda_{2j-Q-1,N-j+1}\right) = -\frac{Q}{P}k_{1,N}+\sum_{i=N-Q}^{N-\frac{Q+1}{2}}\frac{2}{P}k_{1,i}+\sum_{i=2}^{Q+1}\frac{1}{P}k_{i,N}.
%\end{dmath} 

\begin{lemma}\label{lem:SCNALP3and5}
$\{\tilde{r}_i\}$ is an exact solution to LP~\eqref{eq:SCFinalExpNewLP} and LP~\eqref{eq:ExpNewLP}.
\end{lemma}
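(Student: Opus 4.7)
The plan is to exploit the fact that LP~\eqref{eq:SCFinalExpNewLP} and LP~\eqref{eq:ExpNewLP} share the same objective and that the constraint set of the former is a subset of that of the latter. Consequently, if (a) $\{\tilde r_i\}$ is feasible to LP~\eqref{eq:ExpNewLP} and (b) $\{\tilde r_i\}$ is optimal to LP~\eqref{eq:SCFinalExpNewLP}, then the chain $\sum_i \tilde r_i = \mathrm{opt}(\text{LP~\eqref{eq:SCFinalExpNewLP}}) \leq \mathrm{opt}(\text{LP~\eqref{eq:ExpNewLP}}) \leq \sum_i \tilde r_i$ collapses to equality, so $\{\tilde r_i\}$ is an exact solution to both LPs simultaneously. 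Feasibility of $\{\tilde r_i\}$ to LP~\eqref{eq:SCFinalExpNewLP} itself is immediate from the feasibility to the stronger LP~\eqref{eq:SCReducedExpNewLP} established in Lemma~\ref{lem:SCFeasibility}, together with $\lambda_{m,n}\geq k_{m,n}$ (Lemma~\ref{lem:SCLambdakComp}).

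For part (a), the only constraints of LP~\eqref{eq:ExpNewLP} not already covered by LP~\eqref{eq:SCFinalExpNewLP} are those indexed by subsets $\mathcal{N}$ with $|\mathcal{N}|=V<P$. I would mirror the template of Lemma~\ref{lem:FeasibilityFinal}: first derive a monotonicity statement for the new $\{\tilde r_i\}$ in the spirit of Lemma~\ref{lem:Increasingr} so that the smallest partial sum $\sum_{i\in\mathcal{N}}\tilde r_i$ over $|\mathcal{N}|=V$ is attained by a canonical $V$-element subset; next, apply Lemma~\ref{lem:SLLNResults} to concentrate both sides of each constraint at their deterministic $K\to\infty$ limits up to arbitrary $\epsilon$; and finally reduce the constraint to the strict inequality $V/P > Z_{1,V}/Z_{1,P}$, which is just the $M=1$ specialization of the inequality already proved inside Lemma~\ref{lem:FeasibilityFinal}. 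This yields feasibility of $\{\tilde r_i\}$ to LP~\eqref{eq:ExpNewLP} w.p.\ approaching $1$ as $K\to\infty$.

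For part (b), I would reuse the dual-feasible multipliers $\{s_{\mathcal{N}_{m,n}}\}$ constructed in Lemma~\ref{lem:SCOptimality}. Since the dual of LP~\eqref{eq:SCFinalExpNewLP} differs from LP~\eqref{eq:SCDualReducedExpNewLP} only by having $k_{m,n}$ in place of $\lambda_{m,n}$ in the objective (the constraint sets being identical), the same $\{s\}$ is dual-feasible for LP~\eqref{eq:SCFinalExpNewLP}. The essential claim is that $\lambda_{m,n}=k_{m,n}$ for every $\mathcal{N}_{m,n}$ on the support of $\{s\}$: indeed, in the middle branch $m\leq N-Q<n$ of~\eqref{eq:SClambdadef}, Lemma~\ref{lem:SCkPairs} applied to the rectangle $\{1,m\}\times\{n,N\}$ forces $\lambda_{m,n}=k_{m,n}$ directly, and in the first and third branches the extra terms cancel via the same identity whenever the pair is a boundary case (for example $n=N-Q$ or $m=N-Q$, in which $k_{n,N}$ or $k_{1,m}$ cancel against $k_{N-Q,N}$ or $k_{1,N-Q}$). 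A direct inspection of the four partitions (i)--(iv) in Lemma~\ref{lem:SCOptimality} confirms that every $\mathcal{N}_{m,n}\in\mathbb{S}$ lies in exactly one of these favourable regimes. Hence $\sum k_{m,n}s_{\mathcal{N}_{m,n}}=\sum \lambda_{m,n}s_{\mathcal{N}_{m,n}}=\sum_i \tilde r_i$, and weak LP duality yields optimality for LP~\eqref{eq:SCFinalExpNewLP}.

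The main obstacle is the case analysis in part (b): for each of the four constructions of $\mathbb{S}$ one must confirm that every pair $(m,n)$ appearing in the support lies in a regime where $\lambda_{m,n}$ collapses to $k_{m,n}$. This is mechanical but delicate, and in fact illuminates why the exact result is special to $M=1$: for $M>1$ there is no analogous structural collapse of $\lambda$ onto $k$ on the dual support, and residual slackness $\lambda_{m,n}>k_{m,n}$ contributes an irreducible gap that permits only the approximate (and not exact) solution of Theorem~\ref{thm:OverConstrainedReducedLPSolution}.
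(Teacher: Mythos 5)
Your proposal follows essentially the same route as the paper: it reduces the claim to (a) feasibility of $\{\tilde r_i\}$ with respect to LP~\eqref{eq:ExpNewLP} via the concentration argument of Lemma~\ref{lem:SLLNResults} and the inequality $V/P>Z_{1,V}/Z_{1,P}$, and (b) optimality with respect to LP~\eqref{eq:SCFinalExpNewLP} by reusing the dual multipliers of Lemma~\ref{lem:SCOptimality} and observing that $\lambda_{m,n}=k_{m,n}$ on their support --- exactly the content of the paper's lemmas~\ref{lem:SCFeasibilityFinal} and~\ref{lem:SCOptimalityFinal}. Your explicit treatment of the boundary pairs (e.g., $\mathcal{N}_{Q,N-Q}$ in cases (ii) and (iv), which falls in the first branch of~\eqref{eq:SClambdadef} rather than the middle one) is in fact slightly more careful than the paper's blanket assertion that $m\leq N-Q<n$ for every pair in $\mathbb{S}$, but the argument is the same.
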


\begin{proof} By a similar argument as in the proof of Lemma~\ref{lem:NALP3and5}, it suffices to show the feasibility and optimality of $\{\tilde{r}_i\}$ with respect to LP~\eqref{eq:ExpNewLP} (Lemma~\ref{lem:SCFeasibilityFinal}) and LP~\eqref{eq:SCFinalExpNewLP} (Lemma~\ref{lem:SCOptimalityFinal}), respectively. 
\end{proof}

\begin{lemma}\label{lem:SCFeasibilityFinal}
$\{\tilde{r}_i\}$ is feasible with respect to LP~\eqref{eq:ExpNewLP}.
\end{lemma}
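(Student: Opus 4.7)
The plan is to mirror the template of Lemma~\ref{lem:FeasibilityFinal}, exploiting the fact that in the exact ($M=1$) setting every $k_{m,n}$ concentrates at the same asymptotic value $K\,Z_{1,P}$. First, I would dispose of the $|\mathcal{N}|=P$ family of constraints of LP~\eqref{eq:ExpNewLP} directly: by Lemma~\ref{lem:SCFeasibility} combined with Lemma~\ref{lem:SCLambdakComp}, $\sum_{i\in\mathcal{N}_{m,n}}\tilde{r}_i \geq \lambda_{m,n}\geq k_{m,n}$ for every $1\leq m<n\leq N$, which is exactly the $M=1$ version of the $|\mathcal{N}|=P$ constraints of LP~\eqref{eq:ExpNewLP}. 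It therefore remains to verify the constraints with $V:=|\mathcal{N}|$ strictly smaller than $P$.

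For such a generic $\mathcal{N}$, Lemma~\ref{lem:SLLNResults} shows that, for any $\epsilon>0$, with probability approaching $1$ as $K\to\infty$,
\begin{equation*}
\frac{1}{K}\max_{j\in\overline{\mathcal{N}}}\Biggl|\bigcap_{i\in\overline{\mathcal{N}}\setminus\{j\}}\overline{X}_{i,\{j\}}\Biggr| < Z_{1,V}+\frac{\epsilon}{2}.
\end{equation*}
On the left-hand side, every $k_{m,n}$ appearing in~\eqref{eq:SCtilder} is, by Lemma~\ref{lem:SLLNResults} with $V=P$, within $\epsilon'$ of $K\,Z_{1,P}$ with probability approaching $1$, for any prescribed $\epsilon'>0$. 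Direct substitution into~\eqref{eq:SCtilder}, carried out separately for the branches $i\leq N-Q$ and $i>N-Q$, uses only $P=N-2$ and collapses into the single asymptotic identity $\tilde{r}_i/K\to Z_{1,P}/P$; more precisely, for any $\epsilon''>0$,
\begin{equation*}
\left|\frac{\tilde{r}_i}{K}-\frac{Z_{1,P}}{P}\right| < \epsilon'', \qquad \forall\, 1\leq i\leq N,
\end{equation*}
with probability approaching $1$ as $K\to\infty$. Summing over $\mathcal{N}$ gives $\tfrac{1}{K}\sum_{i\in\mathcal{N}}\tilde{r}_i > \tfrac{V}{P}Z_{1,P} - V\epsilon''$.

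Combining the two estimates, feasibility reduces to choosing $\epsilon''$ small enough that $V\epsilon''+\epsilon/2 < \tfrac{V}{P}Z_{1,P}-Z_{1,V}$. The strict positivity of the right-hand side, for every $1\leq V\leq P-1$ and every $0<\alpha<1$, is precisely the inequality~\eqref{eq:VPfInequality} (specialized to $M=1$) already established inside the proof of Lemma~\ref{lem:FeasibilityFinal}, so no new analytic estimate is needed. Since there are only finitely many $(\mathcal{N},j)$ pairs, a union bound over these finitely many SLLN events keeps the whole argument uniform. The main obstacle is therefore not conceptual but notational: the two-piece definition of $\tilde{r}_i$ in~\eqref{eq:SCtilder} has to be unpacked carefully so that the common asymptotic value $Z_{1,P}/P$ emerges, and the step I would be most careful with is verifying that the algebra in $\tilde{r}/K \to (2+1/P)Z_{1,P}$ (obtained from~\eqref{eq:SCtilder} together with $P=N-2$) is consistent for both branches. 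Once this is in place, the rest of the argument is a faithful copy of the feasibility portion of the approximate-solution proof.
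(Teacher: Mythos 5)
Your proof is correct and follows essentially the same route as the paper's: concentration of the constraint right-hand sides around $KZ_{1,V}$ and of each $k_{m,n}$ around $KZ_{1,P}$ via Lemma~\ref{lem:SLLNResults}, followed by the deterministic inequality $\frac{V}{P}>\frac{Z_{1,V}}{Z_{1,P}}$ already established inside the proof of Lemma~\ref{lem:FeasibilityFinal}. Your only deviation---disposing of the $|\mathcal{N}|=P$ constraints exactly via Lemma~\ref{lem:SCFeasibility} together with Lemma~\ref{lem:SCLambdakComp} instead of asymptotically---is a sensible refinement, since at $V=P$ the strict inequality \eqref{eq:VPIneq} degenerates to $1>1$.
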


\begin{proof} We assume $K\rightarrow\infty$, and the results of Lemma~\ref{lem:SLLNResults} hold w.p.~approaching $1$, for any $\epsilon>0$. We need to show that $\{\tilde{r}_i\}$ meets the inequalities:
\begin{equation*}
\frac{1}{K}\sum_{i\in\cal{N}} r_i\geq \hspace{-5pt}\max_{{\scriptsize \begin{array}{c} j\subset \mathcal{\overline{N}}\end{array}}}\hspace{-2.5pt}\frac{1}{K}\Biggl|\bigcap_{i\in {\cal \overline{N}}\setminus \{j\}}\hspace{-2.5pt}{\overline{X}}_{i,\{j\}}\Biggr|,\forall \{\mathcal{N}: 1\leq |\mathcal{N}|\leq P\}.
\end{equation*} From Lemma~\ref{lem:SLLNResults}, for every $\mathcal{N}\subset[N]$, $|\mathcal{N}|=V$, it follows 
\begin{equation}\label{eq:SCmaxZMV}
\left|\max_{{\scriptsize \begin{array}{c} j\subset \mathcal{\overline{N}}\end{array}}}\frac{1}{K}\Biggl|\bigcap_{i\in {\cal \overline{N}}\setminus\{j\}}{\overline{X}}_{i,\{j\}}\Biggr|-Z_{1,V}\right|<\epsilon, \hspace{5pt} \forall 1\leq V\leq P.
\end{equation} Thus, it suffices to show 
\begin{eqnarray}\label{eq:SCrZ1V}
&& \frac{1}{K}\sum_{i\in\cal{N}} \tilde{r}_i>Z_{1,V}+\epsilon,\hspace{5pt} \forall 1\leq V\leq P,
\end{eqnarray} for every $\mathcal{N}\subset[N]$, $|\mathcal{N}|=V$. Moreover, 
\begin{equation}\label{eq:SCkmndev}
\left|\frac{k_{m,n}}{K}-Z_{1,P}\right|<\frac{P}{V}\epsilon, \hspace{5pt} \forall 1\leq m<n\leq N
\end{equation} By combining~\eqref{eq:SCkmndev} with~\eqref{eq:SCtilder} and~\eqref{eq:SCOptimalValuePrimal}, we can write 
\begin{eqnarray}\label{eq:SCrZ1P}
&& \frac{1}{K}\sum_{i\in\cal{N}} \tilde{r}_i>\frac{V}{P}Z_{1,P}+\epsilon, \hspace{5pt} \forall 1\leq V\leq P,
\end{eqnarray} for every $\mathcal{N}\subset[N]$, $|\mathcal{N}|=V$. From~\eqref{eq:SCrZ1V} and~\eqref{eq:SCrZ1P}, one can see that the proof of the lemma is complete so long as
\begin{equation}\label{eq:SCVPZ1VZ1P}
\frac{V}{P}>\frac{Z_{1,V}}{Z_{1,P}}, \hspace{5pt} \forall 1\leq V\leq P.	
\end{equation} Furthermore,~\eqref{eq:SCVPZ1VZ1P} is a special case of~\eqref{eq:VPIneq}, which was previously shown in the proof of Lemma~\ref{lem:FeasibilityFinal}.
\end{proof}	

\begin{lemma}\label{lem:SCOptimalityFinal}
$\{\tilde{r}_i\}$ is optimal with respect to LP~\eqref{eq:SCFinalExpNewLP}.
\end{lemma}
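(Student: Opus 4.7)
The plan is to prove exact optimality via strong duality, re-using the dual construction from Lemma~\ref{lem:SCOptimality} and exploiting the algebraic identities of Lemma~\ref{lem:SCkPairs} to turn the weak-duality bound into an exact match. First, I would write out the dual of LP~\eqref{eq:SCFinalExpNewLP}: its feasibility polytope coincides with that of LP~\eqref{eq:SCDualReducedExpNewLP}, while its objective is $\sum_{m<n} k_{m,n}\,s_{\mathcal{N}_{m,n}}$ (with $k_{m,n}$ in place of $\lambda_{m,n}$). I would then re-use the specific dual solution $\{s_{\mathcal{N}_{m,n}}\}$ built in Lemma~\ref{lem:SCOptimality}, assigning $s_{\mathcal{N}_{m,n}}=1/P$ for each $\mathcal{N}_{m,n}\in\mathbb{S}$ and zero elsewhere; feasibility is inherited because the two duals have identical constraints.

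The crucial step is to establish the \emph{deterministic} identity $\lambda_{m,n}=k_{m,n}$ for every $\mathcal{N}_{m,n}\in\mathbb{S}$. This will follow from Lemma~\ref{lem:SCkPairs} combined with a short case analysis of the partitions $\{\mathbb{S}^{(j)}\}$ in cases~(i)--(iv): by inspection, every pair $(m,n)$ selected into $\mathbb{S}$ falls into one of the following categories. (a)~If $m=1$, then~\eqref{eq:SClambdadef} immediately gives $\lambda_{1,n}=k_{1,n}$. (b)~If $1<m\leq N-Q<n<N$, then $(m,n)$ lies in the middle branch of~\eqref{eq:SClambdadef} and $\{m,1\}\cap\{n,N\}=\emptyset$, so Lemma~\ref{lem:SCkPairs} yields $k_{m,n}-k_{m,N}=k_{1,n}-k_{1,N}$, i.e., $\lambda_{m,n}=k_{m,n}$. (c)~If $n=N-Q$, the term $k_{N-Q,N}$ cancels in the first branch of~\eqref{eq:SClambdadef} and the same identity applied to the index pairs $\{m,1\}$ and $\{N-Q,N\}$ gives $\lambda_{m,N-Q}=k_{m,N-Q}$.

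Combining these, the dual objective at our chosen $\{s_{\mathcal{N}_{m,n}}\}$ satisfies
\[
\sum_{m<n} k_{m,n}\, s_{\mathcal{N}_{m,n}} \;=\; \sum_{m<n} \lambda_{m,n}\, s_{\mathcal{N}_{m,n}} \;=\; \sum_{i=1}^{N} \tilde{r}_i,
\]
where the last equality is exactly what was shown in Lemma~\ref{lem:SCOptimality}. Weak duality then gives $r^{*}_K\geq\sum_i\tilde{r}_i$, and combined with the primal feasibility established in Lemma~\ref{lem:SCFeasibilityFinal} this forces $r^{*}_K=\sum_i\tilde{r}_i$, proving optimality. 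Note that, in contrast to Lemma~\ref{lem:OptimalityFinal}, no appeal to Lemma~\ref{lem:SLLNResults} is needed here: the identities are deterministic, which is precisely what upgrades ``near-optimal'' to ``optimal'' in the $M=1$ setting.

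The main obstacle is the bookkeeping needed to confirm that every pair produced by each of the four constructions of $\mathbb{S}$ in Lemma~\ref{lem:SCOptimality} genuinely falls into one of the three categories (a)--(c); in particular, one must rule out any pair that would land in the first branch of~\eqref{eq:SClambdadef} away from $n=N-Q$, since for such a pair the identity $\lambda_{m,n}=k_{m,n}$ does not hold. This verification is routine but requires chasing the index formulas $\mathcal{N}_{2j-1,N-j+1}$, $\mathcal{N}_{2j-Q-2,N-j+1}$, etc., separately in each of cases~(i)--(iv); no mathematical content beyond Lemma~\ref{lem:SCkPairs} is used.
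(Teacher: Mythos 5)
Your proof is correct and follows essentially the same route as the paper, which likewise reuses the dual solution $\{s_{\mathcal{N}_{m,n}}\}$ from Lemma~\ref{lem:SCOptimality} and observes that $\lambda_{m,n}=k_{m,n}$ on $\mathbb{S}$ so that the dual objective is unchanged. If anything you are slightly more careful: the paper justifies this by claiming $m\leq N-Q<n$ for every $\mathcal{N}_{m,n}\in\mathbb{S}$, which fails for the pair $\mathcal{N}_{Q,N-Q}$ appearing when $R=1$, and your case~(c) (via Lemma~\ref{lem:SCkPairs}) is exactly the patch needed to cover that boundary pair.
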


\begin{proof} The proof follows the exact same line as in the proof of Lemma~\ref{lem:SCOptimality} (and hence omitted to avoid repetition) since: (i) $\lambda_{m,n}=k_{m,n}$, for every $m\leq N-Q< n$ (by \eqref{eq:SClambdadef}), and (ii) $m\leq N-Q<n$, for every $\mathcal{N}_{m,n}\in\mathbb{S}$.	
\end{proof}

\bibliographystyle{IEEEtran}
\bibliography{CDERefs}

\end{document}